\newtheorem{theorem}{Theorem}[section]
\newtheorem{lemma}[theorem]{Lemma}
\newtheorem{claim}[theorem]{Claim}
\newtheorem{definition}[theorem]{Definition}
\renewcommand{\todo}[1]{}
\title{A Linear Time Algorithm for the Maximum Overlap of Two Convex Polygons Under Translation}
\author{Timothy M. Chan\thanks{
Siebel School of Computing and Data Science, University of Illinois at Urbana-Champaign, USA (tmc@illinois.edu).
Work supported in part by NSF Grant CCF-2224271.
} \and
Isaac M. Hair\thanks{
Department of Computer Science, University of California, Santa Barbara, USA (hair@ucsb.edu).
}
}
\newtheorem{problem}{Problem}
\newcommand{\R}{\mathbb{R}}
\newcommand{\eps}{\varepsilon}
\begin{document}

\maketitle

\begin{abstract}
    Given two convex polygons $P$ and $Q$ with $n$ and $m$ edges, the \emph{maximum overlap} problem is to find a translation of $P$ that maximizes the area of its intersection with $Q$. We give the first randomized algorithm for this problem with linear running time. Our result improves the previous two-and-a-half-decades-old algorithm by de Berg, Cheong, Devillers, van Kreveld, and Teillaud (1998), which ran in $O((n+m)\log(n+m))$ time, as well as multiple recent algorithms given for special cases of the problem.
\end{abstract}

\section{Introduction}

Problems related to convex polygons are widely studied in computational geometry, as they are fundamental and give insights to more complex problems. We consider one of the most basic problems in this class:

\begin{problem} \label{prob:overlap}
    Given two convex polygons $P$ and $Q$ in the plane, with $n$ and $m$ edges respectively, find a vector $t \in \mathbb{R}^2$ that maximizes the area of $(P + t) \cap Q$, where $P + t$ denotes $P$ translated by $t$.
\end{problem}

This problem for convex polygons was first explicitly posed as an open question by Mount, Silverman, and Wu~\cite{MountSW96}.
In 1998, de Berg, Cheong, Devillers, van Kreveld, and Teillaud~\cite{BergCDKT98} 
presented the first efficient algorithm which solves the problem in $O((n+m)\log(n+m))$ time.
As many problems have $n\log n$ complexity, this would appear to be the end of the story
for the above problem---or is it?


\paragraph{Motivation and Background.}
The computational geometry literature is replete with various types of problems about simple and convex polygons~\cite{ORourkeST17},
e.g., simple-polygon containment~\cite{Chazelle83} (which has seen exciting development even in the convex polygon case, with translations
and rotations, as recently as last year's SoCG~\cite{ChanH24}). The maximum overlap problem may be viewed as an outgrowth of 
convex-polygon containment problems.

The maximum overlap problem may also be viewed as a formulation of \emph{shape matching}~\cite{AltG00}, which has numerous
applications and has long been a popular topic in computational geometry.  Many notions of distance between
shapes have been used in the past (e.g., Hausdorff distance, Fr\'echet distance, etc.), and the overlap area is
another very natural measure (the larger it is, the closer the two shapes are) and has been considered in numerous papers for different classes of objects (e.g., two convex polytopes in higher dimensions~\cite{AhnBS08,AhnCR13,AhnCKY14}, two arbitrary simple polygons in the plane~\cite{ChengL13,Har-PeledR17}, two unions of balls \cite{CabelloBGKOV09}, one convex
polygon vs.\ a discrete point set\footnote{In the discrete case, we are maximizing the number of points inside the translated convex polygon.} \cite{BarequetDP97,AgarwalHRSSW02}, etc.)\ as well as different types of
motions allowed (translations, rotations, and scaling---with translations-only being the most often studied).  Problem~\ref{prob:overlap} is perhaps the simplest and most basic version, but is already quite fascinating from the theoretical or technical perspective, as we will see.  

\paragraph{A Superlinear Barrier?}

Due to the concavity of (the square root of) the objective function~\cite{BergCDKT98},
it is not difficult to obtain an $O((n+m)\mathop{\rm polylog}(n+m))$-time algorithm
for the problem by applying the well-known \emph{parametric search} technique~\cite{Megiddo83,AgarwalS98}
(or more precisely, a multidimensional version of parametric search~\cite{Toledo92}).
De Berg, Cheong, Devillers, van Kreveld, and Teillaud~\cite{BergCDKT98} took more care in eliminating extra logarithmic factors to obtain their $O((n+m)\log(n+m))$-time algorithm, by avoiding parametric search and instead using more elementary binary search and matrix search techniques~\cite{FredericksonJ84}.
But any form of binary search would seem to generate at least one logarithmic factor, and so it is tempting to believe that
their time bound might be the best possible.

On the other hand, in the geometric optimization literature, there is another well-known technique that can
give rise to \emph{linear}-time algorithms for various problems: namely,
\emph{prune-and-search} (pioneered by Megiddo and Dyer~\cite{Megiddo83a,Dyer84,Megiddo84}).  For prune-and-search to be applicable, one needs a way to
prune a fraction of the input elements at each iteration, but our problem is sufficiently complex that
it is unclear how one could throw away any vertex from the input convex polygons and preserve the answer.

Perhaps because of these reasons, no improved algorithms have been reported since de Berg et al.'s 1998 work.
This is not because of lack of interest in finding faster algorithms.
For example, a (lesser known) paper by Kim, Choi, and Ahn~\cite{KimCA21} considered an unbalanced case of Problem~\ref{prob:overlap} and gave an algorithm with $O(n+m^2\log^3n)$ running time, which is linear
when $m \leq O\left(\sqrt{n/\log^3n}\right)$.  Ahn et al.~\cite{AhnCPSV07} investigated \emph{approximation} algorithms for the problem and obtained logarithmic
running time for approximation factor arbitrarily close to~1, while Har-Peled and Roy~\cite{Har-PeledR17} more generally gave a linear-time algorithm to approximate the entire objective function, which was then used to obtain a linear-time approximation algorithm for an extension of the problem to simple polygons that are decomposable into $O(1)$ convex pieces.

\paragraph{Our Result.} 
We obtain a randomized (exact) algorithm for Problem \ref{prob:overlap}, running in \emph{linear} (i.e., $O(n+m)$) expected time! 
This is the first improvement to de Berg et al.'s result~\cite{BergCDKT98} in 26 years, and is clearly optimal, and so fully settles the complexity of Problem \ref{prob:overlap}.

\paragraph{A New Kind of Prune-and-Search?}
Besides our result itself, the way we obtain linear time is also interesting.
As mentioned, if one does not care about extra $\log (n+m)$ factors, one can use
(multidimensional) parametric search to solve the problem.  Specifically, parametric search allows
us to reduce the optimization problem to the implementation of a \emph{line oracle}
(determining which side of a given line the optimal point $t^*$ lies in), which in turn
can be reduced to a \emph{point oracle}
(evaluating the area of $(P+t)\cap Q$ for a given point $t$).
A point oracle can be implemented in linear time by computing the intersection $(P+t)\cap Q$ explicitly.
(In de Berg et al.'s work~\cite{BergCDKT98}, they directly implemented the line oracle in linear time.)

Various linear-time prune-and-search algorithms (e.g., for low-dimensional linear programming~\cite{Megiddo83a,Dyer84,Megiddo84}, ham-sandwich cuts~\cite{Megiddo85},
centerpoints~\cite{JadhavM94}, Euclidean 1-center~\cite{Dyer86}, rectilinear 1-median~\cite{Zemel84,OgryczakT03}, etc.)\ also exploit line or point oracles; using (in modern parlance)
\emph{cuttings}~\cite{Chazelle04}, such oracle calls let them refine the search space and eliminate
a fraction of the input at each iteration.
Unfortunately, for our problem, we can't technically eliminate any part of the input convex polygons $P$ and $Q$.
Instead, our new idea is to reduce the \emph{cost} of the oracles iteratively as the search space is refined, 
which effectively is as good as reducing the input complexity itself.
As the algorithm progresses, oracles get cheaper and cheaper, and we will
bound the total running time by a geometric 
series.

To execute this strategy, one should think of oracles as data structures with sublinear query time (i.e.,
sublinear in the original input size $n+m$).  Our key idea to designing such data structures is to divide
the input convex polygons into \emph{blocks} of a certain size based on angles/slopes.  This idea is inspired by the recent
work of Chan and Hair~\cite{ChanH24} (they studied the convex-polygon containment problem 
with both translations and rotations, and although our problem without rotations might seem different,
their divide-and-conquer algorithms also crucially relied on division of the input convex polygons by angles/slopes).
We show that at each iteration, as we refine the search space, we can use the current ``block data structure'' to
obtain a better ``block data structure'' with a larger block size.
(The refinement process from one block structure to another block structure shares
some general similarities with the idea of \emph{fractional cascading}~\cite{ChazelleG86}.)

We are not aware of any prior algorithms in the computational geometry literature that work in 
the same way. 
A linear-time algorithm by Chan~\cite{Chan21} for a matrix searching problem
(reporting all elements in a Monge matrix less than given value)
has a recurrence similar to the one we obtain here, but this appears to be a coincidence.  An algorithm for computing so-called ``$\eps$-approximations'' in range spaces by Matou\v sek~\cite{Matousek95} also achieved linear time (for constant $\eps$) by iteratively increasing block sizes, but the algorithm still operates in the realm of traditional prune-and-search (reducing actual input size).
Chazelle's linear-time algorithm for triangulating simple polygons~\cite{Chazelle91} also iteratively
refines a ``granularity'' parameter analogous to our block size, and as one of many steps, also requires the implementation of oracles (for ray shooting, in his case) whose cost similarly varies as a function of the granularity
(at least from a superficial reading of his paper)---fortunately, our algorithm here will not be as complicated!

\paragraph{Remarks.}
Recently, Jung, Kang, and Ahn~\cite{JungKA25}
have announced a \emph{linear}-time algorithm for a related problem:
given two convex polygons $P$ and $Q$ in the plane,
find a vector $t\in\R^2$ that minimizes the area of the
convex hull $(P+t)\cup Q$.
Although the problem may look similar to our problem on the surface, it is in many ways different:
in the minimum
convex hull problem, the optimal point $t^*$ is known to lie on one of $O(n+m)$ candidate lines 
(formed by an edge of one convex polygon and a corresponding extreme vertex of the
other convex polygon), whereas in the maximum overlap problem
(Problem~\ref{prob:overlap}), $t^*$
lies on one of $O((n+m)^2)$ candidate lines (defined by pairs of edges of the
two convex polygons).
Thus a more traditional prune-and-search approach can be used to solve
the minimum convex hull problem, but not the maximum overlap
problem.
Indeed, Jung et al.'s paper~\cite{JungKA25} considered the maximum overlap problem,
but they do not give an improvement over de Berg et al.'s $O((n+m)\log(n+m))$ time algorithm for the two-dimensional case. 
Instead, they give some new results for the problem in higher dimensions $d$, but their time bound is 
$O(n^{2\lfloor d/2\rfloor})$, which is much larger than linear.

We observe that, in the specific case of maximum overlap for convex polytopes in dimension $d = 3$, 
one can obtain
a significantly faster $O((n+m)\mathop{\rm polylog} (n+m))$-time algorithm by directly using 
multi-dimensional parametric search~\cite{Toledo92}, since a point oracle
(computing the volume of the intersection of two convex polytopes in $\R^3$) can
be done in $O(n+m)$ time by explicitly computing the intersection~\cite{Chazelle91,Chan16}, and
this is parallelizable with $O((n+m)\mathop{\rm polylog} (n+m))$ work and $O(\mathop{\rm polylog}(n+m))$ steps by known parallel algorithms for intersection of halfspaces 
in $\R^3$, i.e., convex hull in $\R^3$ in the dual (e.g., see \cite{AmatoP95}). This observation seems to have not appeared in the literature before (including \cite{ZhuK23, JungKA25}).

\section{Preliminaries}
We assume that the input convex polygons $P$ and $Q$ have no (adjacent) parallel edges, as these can be merged. Without loss of generality, $P$ and $Q$ both contain the origin as an interior point. Throughout, we assume that the edges of $P$ and $Q$ are given in counterclockwise order, and we let $N := n + m$. Denote by $[x]$ the set $\{1, 2, \ldots, x\}$, and denote by $[x, y]$ the set $\{x, x+1, \ldots, y\}$. 
``\emph{With high probability}'' means ``with probability at least $1 - N^{-\omega(1)}$.'' 
For a polygonal chain $X$, let $\vert X\vert$ denote the number of vertices in the chain; for an interval $\mathcal{I} \subset [n]$, let $\vert \mathcal{I}\vert$ denote the number of integers in the interval. 
In running-time bounds, $\log x$ should be read as $\log\max\{2, x\}$. We use $\uplus$ to denote the additive union; for example, $\{1, 2\} \uplus \{2, 3\} = \{1, 2, 2, 3\}$.

The \emph{interior} of a triangle refers to all points strictly inside of its bounding edges, and the \emph{interior} of a line segment refers to all points strictly between its bounding points. 
We say that a line $\mathcal{L}$ \emph{strictly intersects} a triangle $\mathcal{T}$ if $\mathcal{L}$ intersects the interior of $\mathcal{T}$. We say that a line $\mathcal{L}$ \emph{strictly intersects} a line segment $\mathcal{T}$ if $\mathcal{L}$ intersects the interior of $\mathcal{T}$, \emph{and also} $\mathcal{L}$ does not contain $\mathcal{T}$.

\paragraph{The Objective Function.}
We want to find a translation $t \in \mathbb{R}^2$ maximizing \[\textnormal{Area}((P+t)\cap Q).\] De Berg, Cheong, Devillers, van Kreveld, and Teillaud showed several interesting properties of the area as a function of $t$ \cite{BergCDKT98}. For our purposes, we only need to import the following.\footnote{Lemma \ref{lem:objconcave} is proved by constructing a 3-dimensional convex polytope modeling the intersection of $P$ and $Q$ as $P$ is translated along a line. The polytope is then analyzed using a theorem due to Gr{\"u}nbaum, Klee, Perles, and Shephard \cite{grunbaum1967convex}.}

\begin{lemma} \label{lem:objconcave}
    $\sqrt{\textnormal{Area}((P+t)\cap Q)}$ is concave over 
    all values of $t$ such that $\textnormal{Area}((P+t)\cap Q) > 0$.
\end{lemma}

The restriction on $t$ is important, as $\textnormal{Area}((P+t)\cap Q)$ suddenly transitions from a concave function to a constant function $\textnormal{Area}((P+t)\cap Q) = 0$ when $t$ is outside of the specified region. For our algorithm, however, this technicality will have little impact. 

\paragraph{Cuttings.}
Our algorithm will leverage low-dimensional cuttings.

\begin{definition}[$\varepsilon$-Cuttings]
    An \emph{$\varepsilon$-cutting} for a (multi)set $X$ of $n$ hyperplanes in $\mathbb{R}^d$, for any $d \geq 1$, is a partition of $\mathbb{R}^d$ into simplices such that the interior of every simplex intersects at most $\varepsilon \cdot n$ hyperplanes of $X$.
\end{definition}

Endowed with the ability to sample random hyperplanes, we can actually produce such a cutting with high probability in sublinear time, as per Clarkson and Shor~\cite{Clarkson87,ClarksonS89}. In the case of constant $\varepsilon$, this follows roughly by sampling logarithmically many random hyperplanes and returning an $(\varepsilon/2)$-cutting for the sample.

\begin{lemma} \label{cor:approxcut}
    Let $X$ be a (multi)set of $n$ hyperplanes in $\mathbb{R}^d$, for any constant $d \geq 1$. Assume we can sample a uniformly random member of $X$ in expected time $O(n^{0.1})$. Let $\varepsilon$ be any constant. Then in $O(n^{0.1+o(1)})$ expected time, we can find a constant-size simplicial subdivision (allowing unbounded cells) that constitutes an $\varepsilon$-cutting of $X$ with high probability.
\end{lemma}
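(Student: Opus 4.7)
The plan is to apply the classical Clarkson--Shor random sampling paradigm at a \emph{constant} sample size, and then boost the resulting constant success probability up to high probability via a sublinear verification step. The main obstacle is that with only a constant-sized sample we cannot directly certify the cutting: we must avoid touching all $n$ hyperplanes, yet still rule out the possibility that some cell is intersected by more than an $\varepsilon$-fraction of $X$.

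First, I would draw $r = r(\varepsilon,d) = O(1)$ hyperplanes uniformly at random from $X$, taking each sample in expected $O(n^{0.1})$ time, and compute the canonical (bottom-vertex) triangulation $T$ of their arrangement in $O(r^d) = O(1)$ additional time. By the Clarkson--Shor analysis, choosing $r$ sufficiently large as a function of $\varepsilon$ and $d$ ensures that, with probability at least some constant (say $2/3$), $T$ is in fact an $(\varepsilon/2)$-cutting of $X$, i.e., every cell is intersected by at most $(\varepsilon/2)n$ hyperplanes of $X$. The candidate $T$ has only $O(1)$ simplices.

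To certify $T$, I would draw $s = \Theta(\log n)$ additional random hyperplanes from $X$ and, for each of the $O(1)$ cells of $T$, count how many of these test hyperplanes intersect it; this takes $O(s \cdot n^{0.1}) = O(n^{0.1+o(1)})$ time overall. By a Chernoff bound plus a union bound over cells, with probability $1 - n^{-\Omega(1)}$ each observed fraction is within $\varepsilon/4$ of the true fraction. I accept $T$ if every observed fraction is at most $(3/4)\varepsilon$, and restart otherwise.

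Finally, combining the guarantees: a genuine $(\varepsilon/2)$-cutting is accepted with high probability, so each attempt succeeds with probability at least $1/2$, yielding an expected $O(1)$ attempts and total expected time $O(n^{0.1+o(1)})$. Conversely, any $T$ that fails to be an $\varepsilon$-cutting has some cell with true fraction exceeding $\varepsilon$, and is therefore rejected in the verification step with probability $1 - n^{-\Omega(1)}$; conditioning on termination, the returned triangulation is a valid $\varepsilon$-cutting with probability $1 - n^{-\Omega(1)}$, matching the statement.
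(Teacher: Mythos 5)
Your proof is correct, but it takes a genuinely different route from the paper's. The paper's argument (given only as a sketch) draws $\Theta(\log n)$ random hyperplanes and triangulates their arrangement: the logarithmic sample size is what supplies the $1-n^{-\Omega(1)}$ guarantee directly, with no verification step. You instead keep the sample at constant size---which is what makes the $O(1)$ bound on the number of simplices immediate---and then buy high probability with a $\Theta(\log n)$-sample Chernoff test plus restarts. This is a fair trade: the paper's version is shorter, but read literally the canonical triangulation of a $\Theta(\log n)$-size arrangement has $\Theta(\log^d n)=n^{o(1)}$ cells, so an extra step (e.g., building an $O(1)$-size cutting on top of the logarithmic sample, using that such a sample is with high probability a relative approximation) is needed to reach the $O(1)$ simplices that the statement promises and that the algorithm of Lemma~\ref{lem:subroutineT12} relies on, since each cell there costs a separate \textsc{MaxRegion} call. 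Your verify-and-restart scheme delivers the $O(1)$ cell count directly. Two minor points to tighten: the ``conditioning on termination'' step is cleanest as a union bound over the geometric sequence of attempts (the probability of ever accepting a non-cutting is $\sum_k \Pr[\text{reject}]^{k-1}\cdot n^{-\Omega(1)} = n^{-\Omega(1)}$), and the expected-time accounting should invoke Wald's identity, since both the number of attempts and the per-attempt sampling time are random. Both are routine.
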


\paragraph{Area Prefix Sums.}
Let $v_1, \ldots, v_n$ be the vertices of $P$ listed in counterclockwise order, and let $v_1', \ldots, v_m'$ be the vertices of $Q$ listed in counterclockwise order.\footnote{$v_1$ and $v_1'$ can be chosen arbitrarily, so long as they are fixed throughout the algorithm.} Let $P[v_x:v_y]$ denote the simple polygon enclosed by (i) the line segment from the origin to $v_x$, (ii) the line segment from the origin to $v_y$, and (iii) the edges encountered when traversing the boundary of $P$ in counterclockwise order from $v_x$ to $v_y$. (Define $P[v_x:v_x] := \emptyset$ for all $x \in [n]$.) We define $Q[v_x':v_y']$ analogously.

By a standard application of prefix sums, we can produce a data structure to report $\textnormal{Area}(P[v_x:v_y])$ and $\textnormal{Area}(Q[v_x':v_y'])$ with $O(N)$ preprocessing time and $O(1)$ query time. This is because (taking $\textnormal{Area}(P[v_x:v_y])$ as an example):
\begin{align*}
    \textnormal{Area}(P[v_x:v_y]) =\ & \textnormal{Area}(P[v_1:v_y]) - \textnormal{Area}(P[v_1:v_x]) \textnormal{ when $y \geq x$, and} \\
    \textnormal{Area}(P[v_x:v_y]) =\ & \textnormal{Area}(P) - (\textnormal{Area}(P[v_1:v_x]) - \textnormal{Area}(P[v_1:v_y])) \textnormal{ when $y < x$.}
\end{align*}
Throughout, we refer to the above data structure as the \emph{area prefix sums for $P$ and $Q$}.

\paragraph{The Configuration Space.}
Consider two simple polygons $X$ and $Y$, or more generally two polygonal chains $X$ and $Y$. Let $e$ be an edge of $X$ and let $e'$ be an edge of $Y$. These two edges determine a parallelogram $\pi(e, e')$ in configuration space such that, for any vector $t \in \mathbb{R}^2$, we have that $e+t$ intersects $e'$ iff $t \in \pi(e, e')$. Let $\Pi(e, e')$ denote the set of four line segments enclosing $\pi(e, e')$. Define the set of all such line segments as
\begin{equation} \label{eq:Pi}
    \Pi(X, Y)\ := \bigcup_{\textnormal{edge $e$ of $X$, edge $e'$ of $Y$}} \Pi(e, e').
\end{equation}
Define $\overleftrightarrow{\Pi}(X, Y)$ to be $\Pi(X, Y)$ where each line segment is extended to a line. Now let $\mathcal{T} \subset \mathbb{R}^2$ be any triangle or line segment. We say that $(X+t) \cap Y$ has the same \emph{combinatorial description} for all $t$ in the interior of $\mathcal{T}$ if and only if, for all edges $e$ of $X$ and $e'$ of $Y$, one of the following holds:
    \begin{enumerate}
        \item $\mathcal{T} \subseteq \pi(e, e')$.
        \item $\pi(e, e')$ is disjoint from the interior of $\mathcal{T}$.
    \end{enumerate}
Intuitively, this means that for all translations $t$ in the interior of $\mathcal{T}$, $(X+t)\cap Y$ will have the same pairs of edges intersecting.

Our algorithm will use a result of Mount, Silverman, and Wu \cite{MountSW96}, stated below in a simplified form:

\begin{lemma} \label{lem:existsquadratic}
    Let $A$ and $B$ be any simple polygons, and let $\mathcal{T} \subset \mathbb{R}^2$ be any triangle or line segment. Suppose that $(A+t) \cap B$ has the same combinatorial description for all $t$ in the interior of $\mathcal{T}$. Then there is a 
    quadratic function $f(t)$ such that $f(t) = \textnormal{Area}((A+t) \cap B)$ for all $t \in \mathcal{T}$.\footnote{The quadratic function also applies to points on the boundary of $\mathcal{T}$, by continuity. 
    } Suppose that we are additionally given the list \[L = \{(e, e') : e \textnormal{ is an edge of } A, e' \textnormal{ is an edge of } B, \mathcal{T} \subseteq \pi(e, e')\},\]
    and we have access in $O(1)$ query time to the endpoints of each edge of $A$ and $B$, oriented so that the relevant interior lies to the left. Then the gradient field of $f(t)$ can be computed in time $O(\vert L\vert)$. 
\end{lemma}

\paragraph{Angles and Angle Ranges.}
The \emph{angle} of an edge $e$ on the boundary of an arbitrary simple polygon $A$, denoted $\theta_A(e)$, is the angle (measured counterclockwise) starting from a horizontal rightward ray and ending with a ray coincident with $e$ that has the interior of $A$ on its left side. We always have that $\theta_A(e) \in [0, 2\pi)$. For any subset $X$ of the edges of $A$, let $\Lambda_A(X)$ denote the minimal interval in $[0, 2\pi)$ such that $\theta_A(e) \in \Lambda_A(X)$ for all $e \in X$. We call $\Lambda_A(X)$ the \emph{angle range} for the subset $X$ with respect to $A$.



\section{Blocking Scheme} \label{sec:blockstructure}
Define the multiset
\[U = \{\theta_P(e) : e \textnormal{ is an edge of $P$}\} \,\cup\, \{\theta_Q(e) : e \textnormal{ is an edge of $Q$}\},\]
and sort $U$ in increasing order. Each member of $U$ has multiplicity at most $2$. This is because there might exist an edge $e$ of $P$ and an edge $e'$ of $Q$ such that $\theta_P(e) = \theta_Q(e')$, but $P$ and $Q$ themselves don't have (adjacent) parallel edges. Note that $U$ can be sorted in linear time. This is because $P$ and $Q$ are both convex polygons, and we assumed that their edges are given in counterclockwise order, so we already have access to $\{\theta_P(e) : e \textnormal{ is an edge of $P$}\}$ and $\{\theta_Q(e) : e \textnormal{ is an edge of $Q$}\}$ in sorted order.

Given a parameter $b \in \mathbb{N}$, referred to as the \emph{block size}, we partition $P$ and $Q$ into \emph{blocks} $P_1, \ldots, P_{\lceil N/b\rceil}$ and $Q_1, \ldots, Q_{\lceil N/b\rceil}$ as follows.\footnote{It is not quite a partition because the boundaries overlap, but these have zero area.} First partition $U$ into subsequences $U_1, \ldots, U_{\lceil N/b\rceil}$ of length $b$ each, except possibly the last one, such that $U_1$ contains the smallest $b$ elements, $U_2$ contains the next smallest $b$ elements, etc. Now for each $i \in \{2, \ldots, \lceil N/b\rceil\}$, check whether $U_i \cap U_{i-1} \neq \emptyset$. This can only happen when the smallest element of $U_i$ also appears in $U_{i-1}$. For every such index $i$, delete the smallest element of $U_i$. This ensures that for every edge $e$ of $P$, there is exactly one index $i$ such that $\theta_P(e) \in U_i$, and similarly for $Q$.

For each $i \in [\lceil N/b\rceil ]$, let $Z^{(P)}_i$ be the subset of $P$'s edges such that $e \in Z_i^{(P)}$ if and only if $\theta_P(e) \in U_i$. Because $P$ is a convex polygon, $Z^{(P)}_i$ is a connected polygonal chain whenever it is nonempty. Now for each $i \in [\lceil N/b\rceil]$, define the block $P_i$ as follows. If $Z_i^{(P)}$ is empty, then $P_i$ is the single point at the origin; otherwise, $P_i$ is the simple polygon enclosed by: (i) the line segment from the origin to the very first vertex of $Z_i^{(P)}$, (ii) the line segment from the origin to the very last vertex of $Z_i^{(P)}$, and (iii) the polygonal chain $Z_i^{(P)}$ itself. The blocks $Q_j$ are defined similarly. See Figure \ref{fig:blocks} for an example. 



Notice that the blocks indeed partition $P$ and $Q$, because we assume that the origin is contained within both convex polygons. 
Since $b$ uniquely determines the partition, we refer to $P_1, \ldots, P_{\lceil N/b\rceil}$ and $Q_1, \ldots, Q_{\lceil N/b\rceil}$ as the \emph{blocks determined by parameter} $b$. The blocking scheme is useful because the angle ranges $\Lambda_P(E_P(P_i))$ and $\Lambda_Q(E_Q(Q_j))$ are strictly disjoint whenever $i \neq j$, where $E_P(P_i)$ is the set of edges that $P_i$ shares with $P$, and $E_Q(Q_j)$ is the set of edges that $Q_j$ shares with $Q$. We call $E_P(P_i)$ the \emph{restriction of $P_i$'s edges to $P$}, and similarly for $E_Q(Q_j)$.

\begin{figure}
    \centering
    \includegraphics[width=\textwidth]{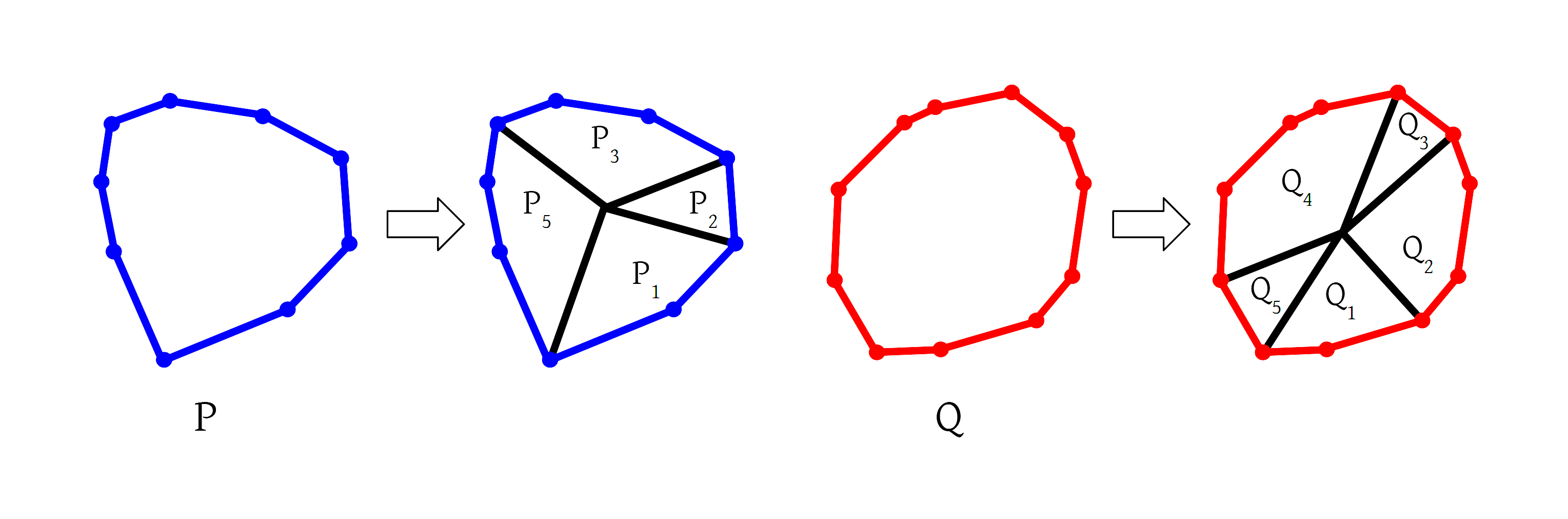}
    \caption{A partition of example convex polygons $P$ and $Q$ into blocks, with $b = 4$. Note that the spatial \emph{sizes} of blocks may vary greatly, and the number of edges in each $P_i$ and $Q_j$ may differ. In fact, for some indices $i$, one of $P_i$ or $Q_i$ may contain just the origin.}
    \label{fig:blocks}
\end{figure}

\paragraph{A Data Structure Using Blocks.} 
As we hone in on the translation $t^*$ maximizing $\textnormal{Area}((P+t^*) \cap Q)$, we want some way to decrease the time required to compute $\textnormal{Area}((P+t) \cap Q)$ for different query translations~$t$. 
Instead of simplifying $P$ or $Q$ directly, we maintain and refine a specific data structure for $P$ and $Q$ that allows us to more efficiently search for optimal placements.

\begin{definition}[$(\mu, b, \mathcal{T})$-Block Structure]
A \emph{$(\mu, b, \mathcal{T})$-block structure} is a data structure containing the following:
    \begin{enumerate}
        \item Access to the vertices 
        of $P$ and $Q$ in counterclockwise order with $O(1)$ query time, and access to the area prefix sums for $P$ and $Q$ with $O(1)$ query time. 
        \item A block size parameter $b$, which determines blocks $P_1, \ldots, P_{\lceil N/b\rceil}$ and $Q_1, \ldots, Q_{\lceil N/b\rceil}$. The blocks are not stored explicitly; instead, the algorithm has access in $O(1)$ query time to the sorted multiset
        \[U = \{\theta_P(e) : e \textnormal{ is an edge of $P$}\} \,\cup\, \{\theta_Q(e) : e \textnormal{ is an edge of $Q$}\},\]
        and access in $O(1)$ query time to the edges of $P$ and $Q$ sorted by their angle ranges.
        \item A region $\mathcal{T} \subset \mathbb{R}^2$ that is either (i) a triangle, (ii) a line segment, or (iii) a point.
        \item A partition of $S := [\lceil N/b\rceil]$ into subsets $S_{\textnormal{Good}}$ and $S_{\textnormal{Bad}}$, with the requirement $\vert S_{\textnormal{Bad}}\vert \leq \mu$. 
        \item For every $i \in S_{\textnormal{Good}}$, access in $O(1)$ query time to a 
        quadratic function $f_i$ such that
        \[f_i(t) = \textnormal{Area}((P_i+t)\cap Q_i) \textnormal{ for all } t \in \mathcal{T}.\]
    \end{enumerate}
\end{definition}

Observe that using (1) and (2), the endpoints and orientation of any edge in a block $P_i$ or $Q_j$ can be found on-demand in $O(1)$ query time, without needing to construct the entire block partition. 
Our algorithm will maintain several different block structures. All of them will reference a common data structure for the vertices/edges of $P$, a common data structure for the vertices/edges of $Q$, a common set of area prefix sums, and a common sorted multiset $U$.

\paragraph{Sketch of the Algorithm for Problem \ref{prob:overlap}.}
At the onset of our linear time algorithm, we start with a $(\mu_0, b_0, \mathcal{T}_0)$-block structure where $\mu_0 = \lceil N/b_0\rceil$, $b_0 = 2$, and $\mathcal{T}_0$ is any triangle containing a superset of the translations $t$ that have $P+t$ intersecting $Q$ (and hence must contain the optimal translation). The partition of $S$ will simply be $S_{\textnormal{Bad}} = [\lceil N/2\rceil]$ and $S_{\textnormal{Good}} = \emptyset$. This corresponds to having no useful information about the objective function $\textnormal{Area}((P+t) \cap Q)$. To initialize the $(\mu_0, b_0, \mathcal{T}_0)$-block structure, essentially all we need to do is: (i) compute the area prefix sums for $P$ and $Q$ once and for all, (ii) sort $U$ once and for all (in linear time), and (iii) identify any triangle $\mathcal{T}_0$ such that, for all $t \in \mathbb{R}^2$, if $(P+t)\cap Q$ is nonempty then $t \in \mathcal{T}_0$.

As the algorithm progresses, we shrink $\mathcal{T}_0$, producing a sequence of smaller and smaller triangles $\mathcal{T}_0 \supseteq \mathcal{T}_1 \supseteq \mathcal{T}_2 \supseteq \mathcal{T}_3 \supseteq \ldots$, each of which is guaranteed to contain the translation $t^* \in \mathbb{R}^2$ maximizing $\textnormal{Area}((P+t^*) \cap Q)$. 
After making each triangle $\mathcal{T}_\ell$, we immediately construct a $(\mu_\ell, b_\ell, \mathcal{T}_\ell)$-block structure, where $\mu_\ell$ gets significantly smaller as $\ell$ increases, and $b_\ell$ gets bigger as $\ell$ increases (see Section \ref{sec:cascade}). Intuitively, each block structure records the information about the objective function that we have learned so far.

Our approach for finding the sequence of triangles uses recursion in the dimension in a manner reminiscent of low-dimensional linear programming \cite{Megiddo83a,Dyer84,Megiddo84}, as detailed in Section \ref{sec:linalg}. We identify each of the next triangles $\mathcal{T}_{\ell+1} \subseteq\mathcal{T}_\ell$ by recursively solving a small number of \emph{line segment optimizations}, where each line segment is contained within $\mathcal{T}_\ell$. Each line segment subproblem is again solved recursively, this time using point optimizations (i.e. finding the area of intersection for a few given translations), where each point is contained within the line segment. Note that each of the line segment subproblems will in general \emph{not} contain the global optimum, and similarly for the point subproblems.

For the algorithm to be efficient, we need to construct the entire sequence of triangles $\mathcal{T}_0 \supseteq \mathcal{T}_1 \supseteq \mathcal{T}_2 \supseteq \mathcal{T}_3 \supseteq \ldots$ in linear time. 
By the time we are constructing $\mathcal{T}_{\ell+1} \subseteq\mathcal{T}_\ell$, observe that:
\begin{enumerate}
    \item We already have access to a $(\mu_\ell, b_\ell, \mathcal{T}_\ell)$-block structure.
    \item The only subproblems that are not solved by recursion are the point queries, which amount to reporting $\textnormal{Area}((P+\hat{t}) \cap Q)$ for different translations $\hat{t} \in \mathcal{T}_\ell$.
\end{enumerate}
When $\ell$ is small, the block structure is not very useful, and (for example) we can default to using an algorithm by O'Rourke, Chien, Olson, and Naddor \cite{o1982new} to answer each of the point queries in $O(N)$ time.\footnote{Our actual algorithm for the point oracle will be the same regardless of $\mu_\ell$ and $b_\ell$, and we will not use the algorithm in \cite{o1982new}.} But when $\ell$ is superconstant, we set the recursive parameters so that $\mu_\ell$ will be sublinear in $\lceil N/b_\ell\rceil$, and $b_\ell$ will be superconstant. This means that the $(\mu_\ell, b_\ell, \mathcal{T}_\ell)$-block structure will have quadratic summary functions $f_i(t)$ for nearly all of the block pairs $(P_i, Q_i)$. We show in Section \ref{sec:t0} that, in this case, the summary functions can be used to answer point queries in \emph{sublinear} time. In the final recurrence, the time to construct each of the triangles $\mathcal{T}_\ell$ will decrease geometrically, so the overall running time is indeed linear.





\section{Using Block Structures for Point Queries} \label{sec:t0}

Below, we formalize the problem of finding the optimal translation within a block structure's region $\mathcal{T}$.

\begin{restatable}[\textsc{MaxRegion}]{problem}{MaxRegionProblem}
    Given a $(\mu, b, \mathcal{T})$-block structure, 
    find \[\max_{t \in \mathcal{T}} \textnormal{Area}((P+t)\cap Q)\] along with the translation $t^* \in \mathbb{R}^2$ realizing this maximum.
\end{restatable}

Throughout, we use $T_{d\textsc{-Max}}(N, b, \mu)$ to denote the expected running time of the fastest algorithm for \textsc{MaxRegion}, where $d$ is the dimension of $\mathcal{T}$.

As mentioned in Section \ref{sec:blockstructure}, the instances of \textsc{MaxRegion} where $\mathcal{T}$ is a triangle or line segment will be solved recursively. For example, to solve \textsc{MaxRegion} with respect to a $(\mu, b, \mathcal{T})$-block structure where $\mathcal{T}$ is a triangle, we will solve a small number of \textsc{MaxRegion} instances for line segments $\mathcal{T}' \subseteq\mathcal{T}$, and then recursively solve \textsc{MaxRegion} for a smaller triangle (with a ``better'' block structure). Due to the containment property, the starting $(\mu, b, \mathcal{T})$-block structure \emph{also} serves as a $(\mu, b, \mathcal{T}')$-block structure for each line segment $\mathcal{T}'$, so the time to solve the line segment subproblems is at most $T_{1\textsc{-Max}}(N, b, \mu)$. A similar line of reasoning shows that \textsc{MaxRegion} with respect to a $(\mu, b, \mathcal{T})$-block structure where $\mathcal{T}$ is a line segment gives rise to point queries whose time complexity is at most $T_{0\textsc{-Max}}(N, b, \mu)$.

In this section, we show how to use block structures to answer the point queries efficiently:






\begin{restatable}[Point Oracle]{lemma}{PointOracleLemma}
\label{lem:T0}
    For all $2 \leq b \leq N$, for all $\mu$, 
    \[T_{0\textsc{-Max}}(N, b, \mu) \leq O\left(N \cdot \frac{\log b}{b} + \mu b^2\right).\]
    That is, given a $(\mu, b, \mathcal{T})$-block structure where $\mathcal{T}$ is a single point $t^*$, we can compute $\textnormal{Area}((P+t^*)\cap Q)$ within the above time bound.
\end{restatable}




Observe that we can decompose the area function as:
\[\textnormal{Area}((P+t^*)\cap Q) \:=\: \sum_{i \in [\lceil N/b\rceil]} \textnormal{Area}((P_i+t^*)\cap Q_i) + \sum_{\substack{(i, j) \in [\lceil N/b\rceil] \times [\lceil N/b\rceil] \\ i \neq j}} \textnormal{Area}((P_i+t^*)\cap Q_j),\]
where $P_1, \ldots, P_{\lceil N/b\rceil}$ and $Q_1, \ldots, Q_{\lceil N/b\rceil}$ are the blocks of $P$ and $Q$ determined by parameter~$b$.
To prove the lemma, it will suffice to compute each of the two summations on the right-hand side within the required time bound.

\paragraph{Computing the First Summation.} We can find $\sum_i \textnormal{Area}((P_i+t^*)\cap Q_i)$ by just reading the lists $S_{\textnormal{Good}}$ and $S_{\textnormal{Bad}}$ that are provided.

\begin{lemma} \label{lem:firstsummation}
    There is an algorithm running in time $O(N/b + \mu b^2)$ that takes as input a $(\mu, b, \mathcal{T})$-block structure, 
    where $\mathcal{T}$ consists of just a single point $t^*$, and outputs
    \[\sum_{i \in [\lceil N/b\rceil]} \textnormal{Area}((P_i+t^*)\cap Q_i).\]
    Here $P_1, \ldots, P_{\lceil N/b\rceil}$ and $Q_1, \ldots, Q_{\lceil N/b\rceil}$ are the blocks of $P$ and $Q$ determined by parameter~$b$.
\end{lemma}

\begin{proof}
    By definition of the block structure, we have a partition of $S = [\lceil N/b\rceil]$ into two subsets $S_{\textnormal{Good}}$ and $S_{\textnormal{Bad}}$, such that $\vert S_{\textnormal{Bad}}\vert \leq \mu$, and for each $i \in S_{\textnormal{Good}}$ we have a 
    quadratic function $f_i(t)$ such that $f_i(t^*) = \textnormal{Area}((P_i + t^*) \cap Q_i)$. We can thus compute
    \[\sum_{i \in S_{\textnormal{Good}}}\textnormal{Area}((P_i + t^*) \cap Q_i)\]
    in time $O(N/b)$ by simply evaluating each $f_i(t^*)$. Since $\vert S_{\textnormal{Bad}}\vert \leq \mu$, we can compute
    \[\sum_{i \in S_{\textnormal{Bad}}}\textnormal{Area}((P_i + t^*) \cap Q_i)\]
    by calculating the area for each $(P_i + t^*) \cap Q_i$ directly. Because each block is a simple polygon with $O(b)$ edges, the time required to compute $\textnormal{Area}((P_i+t^*) \cap Q_i)$ is $O(b^2)$ for each $i \in S_{\textnormal{Bad}}$, or $O(\mu b^2)$ over all $i \in S_{\textnormal{Bad}}$, by using standard algorithms for polygon intersection (see for example \cite{greiner1998efficient}).\footnote{Recall from Section \ref{sec:blockstructure} that we can efficiently compute each of the relevant blocks $P_i$ and $Q_i$ on-demand, including the orientations and ordering of their edges.}
\end{proof}

All that remains is to compute $\sum_{i \neq j} \textnormal{Area}((P_i+t^*)\cap Q_j)$. Each term amounts to computing the area of intersection for two blocks with \emph{disjoint} angle ranges, so in principle we could use binary searches to evaluate each term in $O(\mathop{\rm polylog}(b))$ time. However, this is too slow for our purposes, as there are a near quadratic number of terms. We will instead use a recursive decomposition of the sum. 


\paragraph{Intersecting Unions of Blocks.} Before presenting our algorithm for the second summation, we give some tools for intersecting unions of blocks that have disjoint angle ranges. The following is due to Dobkin and Souvaine \cite{dobkin1991detecting}.

\begin{lemma} \label{lem:layer}
    Let $P$ and $Q$ be any two convex polygons, and let $X$ and $Y$ be any two connected polygonal chains whose edges are a subset of the edges of $P$ and $Q$, respectively. Assume that $\Lambda_P(X) \cap \Lambda_Q(Y) = \emptyset$. 
    If $X$ and $Y$ intersect, then they intersect in either a single point, a line segment, or two points. Furthermore, if we have access to the vertices of $X$ and $Y$ in sorted order with $O(1)$ query time, then the intersection(s) 
    can be found in $O(\log\vert X\vert + \log \vert Y \vert)$ time.
\end{lemma}

Combining this with the area prefix sums stored in the $(\mu, b, \mathcal{T})$-block structure, we have the lemma below. We state the lemma in a slightly more general form than is required for the point oracle; in particular, we allow the region $\mathcal{T}$ in the $(\mu, b, \mathcal{T})$-block structure to be a triangle, line segment, or point. This is because we will also use the lemma in Section \ref{sec:cascade} when presenting our algorithm to create new block structures.

\begin{lemma} \label{lem:intersectarea}
    There is an algorithm which takes as input (i) a $(\mu, b, \mathcal{T})$-block structure, (ii) a translation $\hat{t} \in \mathbb{R}^2$,\footnote{The translation does not need to be contained within $\mathcal{T}$, although we will only use the lemma for translations $\hat{t} \in \mathcal{T}$.} and (iii) two intervals $\mathcal{I}_P, \mathcal{I}_Q \subseteq [\lceil N/b\rceil]$ such that $\mathcal{I}_P \cap \mathcal{I}_Q = \emptyset$, runs in time $O(\log\vert \mathcal{I}_P\vert + \log \vert \mathcal{I}_Q\vert + \log b)$, and behaves as follows. Define polygons
    \[A = \bigcup_{i \in \mathcal{I}_P} P_i \quad \text{and} \quad B = \bigcup_{j \in \mathcal{I}_Q} Q_j,\]
    where $P_1, \ldots, P_{\lceil N/b\rceil}$ and $Q_1, \ldots, Q_{\lceil N/b\rceil}$ are the blocks of $P$ and $Q$ determined by parameter~$b$. The algorithm outputs $\textnormal{Area}((A+\hat{t}) \cap B)$, along with a list $L$ of size at most $O(1)$ containing every pair of edges $(e, e')$ from $A$ and $B$ such that $\hat{t} \in \pi(e, e')$.
\end{lemma}

\begin{proof}
    Notice that $A$ is a union of consecutive blocks from $P$, and $B$ is a union of consecutive blocks from~$Q$.
    Assume without loss of generality that both $A$ and $B$ are convex polygons. Otherwise, we can partition $A$ and $B$ into at most two convex pieces each, using a line segment from the origin to an appropriate point on the boundary of each polygon. (See Figure \ref{fig:SplitA}.) To get the final area of intersection, we compute the area of intersection for each pair of convex pieces and add the results. To get the final list $L$, we take the union of all lists for each pair of convex pieces, and delete any edge pairs containing an edge that was added when partitioning $A$ and $B$.

\begin{figure}
    \centering
    \includegraphics[width=\textwidth]{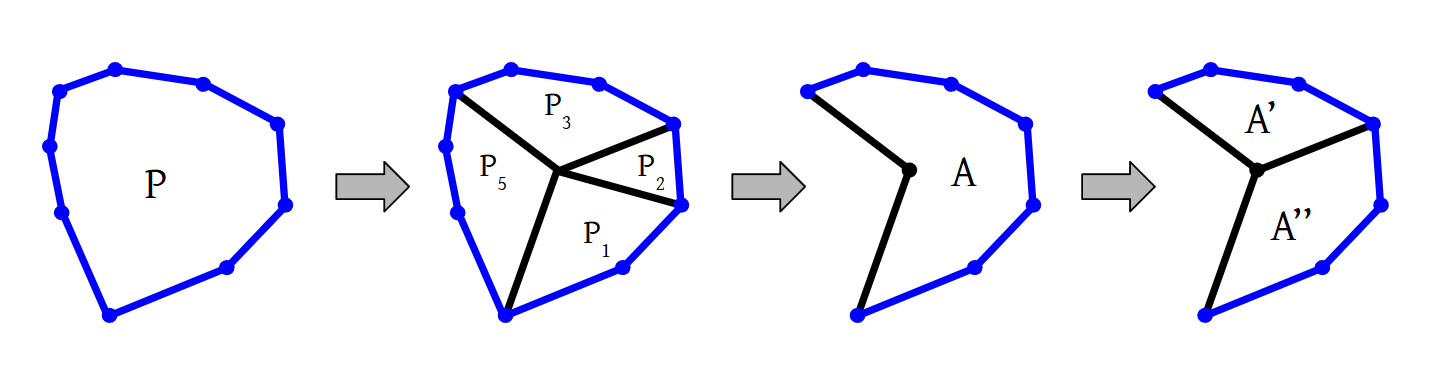}
    \caption{An example polygon $A = \bigcup_{i \in \mathcal{I}_P} P_i$, where $\mathcal{I}_P = \{1,2,3\}$. Notice that the origin is a reflex vertex. We split $A$ into two convex polygons $A'$ and $A''$ using a line segment from the origin to an appropriate vertex on the boundary of $A$.}
    \label{fig:SplitA}
\end{figure}




    By definition of our blocking scheme, and by the assumption that $\mathcal{I}_P \cap \mathcal{I}_Q = \emptyset$, we know that the edges that $A$ shares with $P$ and the edges that $B$ shares with $Q$ form two convex polygonal chains $X$ and $Y$ that have disjoint angle ranges. Stated more formally, we set $X = E_P(A)$ and $Y = E_Q(B)$, and we have $\Lambda_P(X) \cap \Lambda_Q(Y) = \emptyset$. The set of edges in $X$ contains all but two edges of $A$; let the remaining edges be $e_1$ and $e_2$. Similarly, the set of edges in $Y$ contains all but two edges of $B$; let the remaining edges be $e_1'$ and $e_2'$.

    The most difficult part in computing $\textnormal{Area}((A+\hat{t}) \cap B)$ is to identify the intersection(s) between the boundary of $A+\hat{t}$ and the boundary of $B$. First we apply Lemma \ref{lem:layer} to find the intersection(s) between $X+\hat{t}$ and $Y$, which takes time $O(\log\vert X\vert + \log \vert Y\vert) = O(\log\vert \mathcal{I}_P\vert + \log \vert \mathcal{I}_Q\vert + \log b)$. (Observe that the angle range conditions are invariant under translation.) As per the lemma, we know that the total number of intersections is at most two. All of the remaining intersections must involve at least one of the edges $e_1, e_2, e_1', e_2'$. Using that $X$ and $Y$ are convex polygonal chains, the number of additional intersections is at most $O(1)$, and we can find them in time $O(\log\vert X\vert + \log \vert Y\vert) = O(\log\vert \mathcal{I}_P\vert + \log \vert \mathcal{I}_Q\vert + \log b)$ via known algorithms to intersect a line segment with a convex polygonal chain (see for example \cite{dobkin1991detecting}). Let $L$ be the set of all edge pairs corresponding to an intersection. More precisely, $L$ contains every pair $(e, e')$ such that $e$ is an edge of $A$, $e'$ is an edge of $B$, and $\hat{t} \in \pi(e, e')$. In degenerate cases where the intersection is a shared endpoint or line segment, we include all incident edge pairs; this changes $\vert L\vert$ by only a constant factor. Observe that $L$ can be computed in $O(1)$ time by examining the $O(1)$ intersections. 

    We now describe how to compute $\textnormal{Area}((A+\hat{t}) \cap B)$. First suppose that $L = \emptyset$. This means that exactly one of the following holds:
    \begin{enumerate}
        \item $(A+\hat{t}) \cap B = A+\hat{t}$,
        \item $(A+\hat{t}) \cap B = B$, or
        \item $(A+\hat{t}) \cap B = \emptyset$.
    \end{enumerate}
    We can determine which case holds by (i) checking whether a vertex of $(A+\hat{t})$ is contained within $B$, and then (ii) checking whether a vertex of $B$ is contained within $(A+\hat{t})$. The time is $O(\log\vert \mathcal{I}_P\vert + \log \vert \mathcal{I}_Q\vert + \log b)$ because $A$ and $B$ are convex polygons. 
    After this, we can compute the area of intersection in $O(1)$ time, because the area prefix sums for $P$ and $Q$ allow us to determine both $\textnormal{Area}(A + \hat{t}) = \textnormal{Area}(A)$ and $\textnormal{Area}(B)$ in $O(1)$ query time.

    Now suppose that $L \neq \emptyset$. Using that $\vert L\vert = O(1)$, we know that the boundary of $(A+\hat{t}) \cap B$ can be partitioned into a constant number of convex polygonal chains, each of which is either on the boundary of $(A+\hat{t})$ or on the boundary of $B$. We can identify the chains in $O(1)$ time by examining each edge pair in~$L$. Let $X_1, \ldots, X_{O(1)}$ be the chains on the boundary of $(A+\hat{t})$, and let $Y_1, \ldots, Y_{O(1)}$ be the chains on the boundary of $B$. (Notice that each of the chains $X_i$ has already been translated by $\hat{t}$.)
    
    Because we assumed that $A$ and $B$ are convex polygons, the intersection $(A+\hat{t}) \cap B$ is also a convex polygon. We use this to decompose $(A+\hat{t}) \cap B$ as follows. For each $X_i$, let $A_i$ be its convex hull; for each $Y_j$, let $B_j$ be its convex hull. Define
    \[C = \left((A + \hat{t}) \cap B\right) \backslash \left(\left(\bigcup_i A_i\right) \cup \left(\bigcup_j B_j\right)\right).\]
    Because there are a constant number of $A_i$'s and $B_j$'s, $C$ has constant description complexity. $C$ can be computed in constant time using the list $L$.
    Because $(A+\hat{t}) \cap B$ is a convex polygon, the $A_i$'s, $B_j$'s, and $C$ constitute a partition of $(A + \hat{t}) \cap B$. (The pieces overlap at their boundaries, but the overlaps have zero area. See Figure \ref{fig:ABC} for an example.)

\begin{figure}
    \centering
    \includegraphics[width=0.6\textwidth]{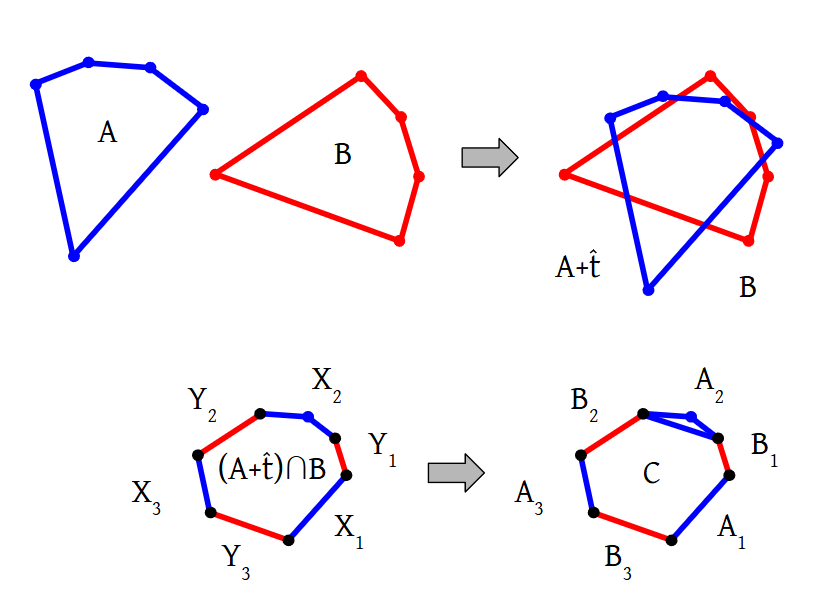}
    \caption{An example partition of $(A+\hat{t})\cap B$ into $A_i$'s, $B_j$'s, and $C$. Because this example is small, all of the $A_i$'s and $B_j$'s have zero area, except for $A_2$.}
    \label{fig:ABC}
\end{figure}
    
    Using the decomposition, we have
    \[\textnormal{Area}((A+\hat{t}) \cap B) = \sum_i\textnormal{Area}(A_i) + \sum_j\textnormal{Area}(B_j) + \textnormal{Area}(C).\]
    We can compute the first two summations using the area prefix sums for $P$ and $Q$, along with the inclusion-exclusion principle, 
    in $O(1)$ time. 
    Computing $\textnormal{Area}(C)$ also takes $O(1)$ time.
\end{proof}

\paragraph{Computing the Second Summation.} Now we show how to compute $\sum_{i \neq j} \textnormal{Area}((P_i+t^*)\cap Q_j)$. 

\begin{lemma} \label{lem:secondsummation}
    There is an algorithm running in time $O(N \cdot \frac{\log b}{b})$ that takes as input a $(\mu, b, \mathcal{T})$-block structure, 
    where $\mathcal{T}$ consists of just a single point $t^*$, and outputs
    \[\sum_{\substack{(i, j) \in [\lceil N/b\rceil] \times [\lceil N/b\rceil] : \\ i \neq j}} \textnormal{Area}((P_i+t^*)\cap Q_j).\]
    Here $P_1, \ldots, P_{\lceil N/b\rceil}$ and $Q_1, \ldots, Q_{\lceil N/b\rceil}$ are the blocks of $P$ and $Q$ determined by parameter~$b$.
\end{lemma}

\begin{proof}
    We give a recursive algorithm for the problem. Given two integers $x, y \in [\lceil N/b\rceil]$ with $y \geq x$, define
\[W(x, y) := \{(i, j) \in [\lceil N/b\rceil] \times [\lceil N/b\rceil]: i \neq j,\:  x \leq i \leq y, x \leq j \leq y\}.\] Note that when $y = x$, we have $W(x, y) = \emptyset$. The recursive subproblems will amount to computing
\[\sum_{(i, j) \in W(x, y)}\!\!\textnormal{Area}((P_i + t^*) \cap Q_j)\]
for various integers $y > x$.
Let $T(y-x, b)$ denote the running time of the fastest algorithm to compute this sum.
    Observe that
    \[\sum_{\substack{(i, j) \in [\lceil N/b\rceil] \times [\lceil N/b\rceil] : \\ i \neq j}} \textnormal{Area}((P_i+t^*)\cap Q_j) = \sum_{(i, j) \in W(1, \lceil N/b\rceil)}\textnormal{Area}((P_i + t^*) \cap Q_j).\]
    In other words, the lemma amounts to upper bounding $T(\lceil N/b\rceil - 1, b)$. In the following, we show that $T(z, b) \leq 2T(\lfloor z/2\rfloor, b) + O(\log z + \log b)$ when $z \geq 2$, and $T(1, b) = O(\log b)$; solving the recurrence gives the desired upper bound $T(\lceil N/b\rceil - 1, b) = O(N \cdot \frac{\log b}{b})$.
    

    Fix any positive integers $x, y \in [\lceil N/b\rceil]$ with $y > x$. Let $g := \lceil \frac{x+y}{2}\rceil$ and $z := y-x$. We decompose the target summation as follows:
    \begin{align*}
        \sum_{(i, j) \in W(x, y)}&\textnormal{Area}((P_i + t^*) \cap Q_j)\ = \\
        & \sum_{(i, j) \in W(x, g-1)}\textnormal{Area}((P_i + t^*) \cap Q_j) \:+ \sum_{(i, j) \in W(g, y)}\textnormal{Area}((P_i + t^*) \cap Q_j) + {}\\
        & \sum_{x \leq i < g \leq j \leq y}\textnormal{Area}((P_i + t^*) \cap Q_j) \:+ \sum_{x \leq j < g \leq i \leq y}\textnormal{Area}((P_i + t^*) \cap Q_j).
    \end{align*}

    If $z > 2$, the first two summations can be evaluated in time at most $2 T(\lfloor \frac{y-x}{2}\rfloor, b) = 2T(\lfloor z/2\rfloor, b)$ via recursion.\footnote{The time bound follows from the observation that $y - \lceil \frac{x+y}{2}\rceil = \lfloor \frac{y-x}{2}\rfloor$ and $(\lceil \frac{x+y}{2}\rceil - 1) - x \leq \lfloor \frac{y-x}{2}\rfloor$.} If $z = 2$, then $x = g-1$ and
    \[\sum_{(i, j) \in W(x, g-1)}\textnormal{Area}((P_i + t^*) \cap Q_j) = 0,\]
    so only one recursive subproblem is required (although we still upper bound the time spent on the recursive subproblem by $2T(1, b)$). If $z = 1$, then $x = g-1$ and $g = y$, meaning
    \[\sum_{(i, j) \in W(x, g-1)}\textnormal{Area}((P_i + t^*) \cap Q_j) = \sum_{(i, j) \in W(g, y)}\textnormal{Area}((P_i + t^*) \cap Q_j) = 0,\]
    so no recursion is required. 

    The last two summations can be evaluated directly, by invoking Lemma \ref{lem:intersectarea}. For example, to compute
    \[\sum_{x \leq i < g \leq j \leq y}\textnormal{Area}((P_i + t^*) \cap Q_j),\]
    we set $\mathcal{I}_P = [x, g-1]$, $\mathcal{I}_Q = [g, y]$, and $\hat{t} = t^*$. 
    The total time for each of the two invocations is $O(\log z + \log b)$.
\end{proof}

\section{Creating New Block Structures} \label{sec:cascade}

In this section, we describe how to efficiently construct a $(\mu', b', \mathcal{T}')$-block structure, assuming that we already have access to a $(\mu, b, \mathcal{T})$-block structure for some region $\mathcal{T} \supseteq \mathcal{T}'$. Before stating the problem formally, we discuss some requirements on $\mathcal{T}'$ that will enable us to construct the $(\mu', b', \mathcal{T}')$-block structure efficiently. Let $P_1', \ldots, P_{\lceil N/b'\rceil}'$ and $Q_1', \ldots, Q_{\lceil N/b'\rceil}'$ be the blocks of $P$ and $Q$ determined by parameter $b'$.

The main existential obstruction to finding a $(\mu', b', \mathcal{T}')$-block structure is that we require $S_{\textnormal{Bad}}$ to have size at most $\mu'$. We will usually have $\mu'$ being much smaller than $\lceil N/b'\rceil$, in which case nearly all of the indices $i \in [\lceil N/b'\rceil]$ must satisfy $i \not\in S_{\textnormal{Bad}}$ and hence $i \in S_{\textnormal{Good}}$. For each of these indices, it must be possible for us to find a 
quadratic summary function $f_i(t)$ such that $f_i(t) = \textnormal{Area}((P_i'+t)\cap Q_i') \textnormal{ for all } t \in \mathcal{T}'$.

To reason about this, fix a single index $i \in [\lceil N/b'\rceil]$, and consider the block pair $(P_i', Q_i')$. By Lemma \ref{lem:existsquadratic}, we know that $\textnormal{Area}((P_i'+t) \cap Q_i')$ can be summarized if $(P_i'+t)\cap Q_i'$ has the same combinatorial description for all $t$ in the interior of $\mathcal{T}'$. So a basic requirement that would permit us to construct a $(\mu', b', \mathcal{T}')$-block structure is that all but $\mu'$ of the block pairs $(P_i', Q_i')$ satisfy this property. Unfortunately, this is a bit difficult to work with algorithmically. Towards finding a more useful requirement, let $\mathcal{A}$ be the arrangement of the line segments in $\Pi(P_i', Q_i')$. Notice that we can re-interpret Lemma \ref{lem:existsquadratic} in terms of $\mathcal{A}$ as follows:
\begin{enumerate}
    \item If $\mathcal{T}'$ is a triangle, then $\textnormal{Area}((P_i'+t) \cap Q_i')$ can be summarized if the interior of $\mathcal{T}'$ is contained within a single cell of $\mathcal{A}$.
    \item If $\mathcal{T}'$ is a line segment, then $\textnormal{Area}((P_i'+t) \cap Q_i')$ can be summarized if either (1) the interior of $\mathcal{T}'$ is contained within a single cell of $\mathcal{A}$, or (2) the interior of $\mathcal{T}'$ is contained within a single edge of $\mathcal{A}$.
\end{enumerate}
The same implications hold if $\mathcal{A}$ is the arrangement of lines in $\overleftrightarrow{\Pi}(P_i', Q_i')$. This means that we can construct a $(\mu', b', \mathcal{T}')$-block structure if $\mathcal{T}'$ is strictly intersected by at most $\mu'$ lines from the multiset
\begin{align} \label{eq:linesettt}
    \biguplus_{i \in [\lceil N/b'\rceil]}\overleftrightarrow{\Pi}(P_i', Q_i').
\end{align}
(Recall that if $\mathcal{T}'$ is a line segment, we say that $\mathcal{T}'$ is strictly intersected by a line $\mathcal{L}$ if $\mathcal{L}$ intersects the interior of $\mathcal{T}'$, \emph{and also} $\mathcal{L}$ does not contain $\mathcal{T}'$.)


This is more useful algorithmically because, in the final linear time algorithm (see Section \ref{sec:linalg}), it allows us to find an appropriate $\mathcal{T}'$ using cuttings as per Lemma \ref{cor:approxcut}. However, it is still slightly too weak for us to construct the new block structure efficiently. Our final requirement will reference block pairs $(P_i, Q_j)$ corresponding to the \emph{existing} $(\mu, b, \mathcal{T})$-block structure, resulting in a multiset of lines that constitutes a superset of (\ref{eq:linesettt}).

Below, we formally define the problem of constructing an improved block structure from an existing one:

\begin{restatable}[\textsc{Cascade}]{problem}{CascadeProblem}
    Given a $(\mu, b, \mathcal{T})$-block structure, parameters $\mu'$ and $b'$ such that $b$ divides $b'$, and a triangle or line segment $\mathcal{T}' \subseteq \mathcal{T}$, either produce a $(\mu', b', \mathcal{T}')$-block structure, or report failure.
    Failure may only be reported if $\mathcal{T}'$ is strictly intersected by more than $\mu'$ lines from the multiset
    \begin{equation} \label{eq:setS}
        \mathcal{S}\ := \biguplus_{k \in [\lceil N/b'\rceil]} \left(\biguplus_{(i, j) \in C_k \times C_k}\overleftrightarrow{\Pi}(P_{i}, Q_{j})\right),
    \end{equation}
    where $P_1, \ldots, P_{\lceil N/b\rceil}$ and $Q_1, \ldots, Q_{\lceil N/b\rceil}$ are the blocks of $P$ and $Q$ determined by parameter~$b$, $C_k := [(k-1)\cdot \frac{b'}{b} + 1, k \cdot \frac{b'}{b}]$ for $1 \leq k < \lceil N/b'\rceil$, and $C_{\lceil N/b'\rceil} := [(\lceil N/b'\rceil-1)\cdot \frac{b'}{b} + 1, \lceil N/b\rceil]$. The lines in $\mathcal{S}$ are counted with multiplicity.
\end{restatable}


Let $T_{\textsc{Cascade}}(N, b, b', \mu)$ be the expected running time of the fastest algorithm for \textsc{Cascade}. Our upper bound on the time complexity won't depend on $\mu'$ or the dimensions of $\mathcal{T}$ and $\mathcal{T}'$.

Our goal is to prove the following:

\begin{restatable}[Cascading Subroutine]{lemma}{CascadingLemma}
\label{lem:Cascade}
    For all $2 \leq b, b' \leq N$ such that $b$ divides $b'$, for all $\mu$, 
    \[T_{\textsc{Cascade}}(N, b, b', \mu) \leq O\left(N \cdot \frac{\log b'}{b} + \mu b^2\right).\]
\end{restatable}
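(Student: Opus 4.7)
The plan is to construct the new $(\mu', b', \mathcal{T}')$-block structure by processing each of the $\lceil N/b' \rceil$ new blocks independently. Because $b$ divides $b'$, each new block $P'_k$ (respectively $Q'_k$) is the union of the $b'/b$ consecutive old sub-blocks $P_{k(b'/b)+i}$ (respectively $Q_{k(b'/b)+j}$) for $i, j \in [b'/b]$, so the block aggregation itself is essentially free. For each $k$, we decide whether to mark it Bad or to output the constant-complexity quadratic $f'_k(t)$ valid on all $t \in \mathcal{T}'$.

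A new block $k$ is placed in $S'_{\textnormal{Bad}}$ in one of two cases: (i) some old sub-block contained in $P'_k$ or $Q'_k$ already lies in the old $S_{\textnormal{Bad}}$, so its quadratic is unavailable, or (ii) the interior of $\mathcal{T}'$ crosses some line in the sub-family of $\mathcal{S}$ indexed by $k$, namely $\bigcup_{(i,j) \in [b'/b]^2} \overleftrightarrow{\Pi}(P_{k(b'/b)+i}, Q_{k(b'/b)+j})$, in which case the combinatorial type of $(P'_k + t) \cap Q'_k$ changes on $\mathcal{T}'$ and no single quadratic can represent its area. Case (i) contributes at most $\mu$ bad new blocks; case (ii) contributes at most the number of lines of $\mathcal{S}$ intersecting the interior of $\mathcal{T}'$, since each such line is tied to a unique $k$. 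If this count exceeds $\mu'/2$ we report failure (as permitted); otherwise $|S'_{\textnormal{Bad}}| \leq \mu + \mu'/2 \leq \mu'$ in the intended parameter regime.

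To compute $f'_k$ for each good new block we fan-decompose $(P'_k + t) \cap Q'_k$ around the origin. The key geometric input is that for $i \neq j$ the sub-blocks $P_{k(b'/b)+i}$ and $Q_{k(b'/b)+j}$ have strictly disjoint edge-angle ranges, so only $O(1)$ choices of $j$ per $i$ yield a nonempty cross-intersection, and each such cross term has constant combinatorial complexity once its type is fixed on $\mathcal{T}'$. The diagonal $i = j$ contributions reuse the old $f_{k(b'/b)+i}$ directly, while each of the $O(b'/b)$ off-diagonal contributions per new block is a constant-degree quadratic whose coefficients we extract in $O(\log b' \log b)$ time: an outer $O(\log b')$ binary search localizes the relevant sub-block pair within the list of $b'/b$ sub-blocks, and an inner $O(\log b)$ binary search pinpoints the constant number of responsible edges inside each $O(b)$-edge sub-block. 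Summed over all $O(N/b')$ good new blocks, this yields $O(N \log b' \log b / b)$ work.

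The remaining $O(\mu b^2)$ term accounts for the $\mu$ inherited bad sub-blocks: for each such pair $(P_i, Q_i)$ we explicitly enumerate the $O(b^2)$ segments of $\Pi(P_i, Q_i)$, so that the $\mathcal{S}$-crossing test and the subsequent Bad-marking of the containing new block remain consistent with the failure criterion. The hardest part of the proof will be rigorously justifying the fan decomposition: that each off-diagonal piece is a single constant-complexity quadratic on $\mathcal{T}'$, and that its coefficients can be read off in the claimed $O(\log b' \log b)$ time atop the old block structure rather than $\Theta(b)$ time (which would break the target bound). A secondary difficulty is bounding the $\mathcal{S}$-crossing count efficiently without instantiating all $O(Nb')$ lines of $\mathcal{S}$ explicitly, which will again rely on angle-disjointness to reduce the work per new block to $O(b'/b)$ candidate pairs.
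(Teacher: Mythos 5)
Your overall skeleton matches the paper's (process each new block $k$ from its $b'/b$ old sub-blocks, split into diagonal and off-diagonal terms, spend $O(\mu b^2)$ on the old bad sub-blocks, and charge bad new blocks to lines of $\mathcal{S}$ crossing $\mathcal{T}'$), but there are two genuine gaps. The first is your case (i): you place $k$ in $S'_{\textnormal{Bad}}$ whenever it contains an old sub-block from $S_{\textnormal{Bad}}$, and then argue $|S'_{\textnormal{Bad}}| \leq \mu + \mu'/2 \leq \mu'$ ``in the intended parameter regime.'' That regime never occurs: in the main algorithm $\mu = 20N/b^3$ and $\mu' = 20N/(b')^3$ with $b' > b$, so $\mu' < \mu$ and the inherited bad blocks alone already exceed $\mu'$. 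Moreover, these inherited bad blocks need not correspond to any line of $\mathcal{S}$ crossing the interior of $\mathcal{T}'$, so you can neither certify $|S'_{\textnormal{Bad}}| \leq \mu'$ nor legitimately report failure. The paper's point is that the $O(\mu b^2)$ work is spent \emph{resolving} the old bad diagonal pairs, not just enumerating them: for each old bad $i$ one checks by brute force over the $O(b^2)$ segments of $\Pi(P_i, Q_i)$ whether $\mathcal{T}'$ lies in a single cell, and if so recovers the quadratic via Lemma~\ref{lem:findquadratic}. Only when this check fails does $k$ become bad, and then its badness is witnessed by a segment of $\Pi(P_i,Q_i)$ (hence a line of $\mathcal{S}$) crossing the interior of $\mathcal{T}'$, which is what makes the charging argument and the failure criterion work.

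The second gap is your claim that, by disjointness of angle ranges, ``only $O(1)$ choices of $j$ per $i$ yield a nonempty cross-intersection.'' This is false: blocks are defined by edge angles, not spatial extent, so a single $P_i$ can have positive-area overlap with every $Q_j$ in its group; disjoint angle ranges only bound the number of \emph{boundary} intersections, not the number of overlapping pairs. Your outer ``$O(\log b')$ binary search to localize the relevant sub-block pair'' therefore has nothing well-defined to localize. The paper instead groups the off-diagonal terms for each $i$ into two terms, $\textnormal{Area}((P_i+t)\cap Q_{i,k}^-)$ and $\textnormal{Area}((P_i+t)\cap Q_{i,k}^+)$, where $Q_{i,k}^{\mp}$ are unions of consecutive blocks and hence convex polygons whose restricted edge sets have angle ranges disjoint from that of $P_i$. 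Each such term is handled as a whole: Lemma~\ref{lem:meshintersect} decides in $O(\log b \log b')$ time whether $\mathcal{T}'$ stays in one cell of $\Pi(P_i, Q_{i,k}^{\mp})$ and locates the $O(1)$ intersecting edge pairs, after which the area prefix sums give the quadratic in $O(1)$ time. This grouping is exactly what reduces the per-block work from $(b'/b)^2$ pairs to $O(b'/b)$ terms; you arrive at the right count but without a valid mechanism. (A minor further point: a line of $\mathcal{S}$ can be associated with up to two group indices $k$, not one, which is why the threshold in the failure condition is $\mu'/2$.)
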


\paragraph{Improved Tools for Intersecting Blocks.} Before giving our algorithm for \textsc{Cascade}, we need to strengthen Lemmas \ref{lem:layer} and \ref{lem:intersectarea} from Section \ref{sec:t0}. We start with a counterpart of Lemma \ref{lem:layer} that is useful for sets of translations, as opposed to just a specific translation. For a region $\mathcal{T}' \subset \mathbb{R}^2$ and chains $X$ and $Y$ with disjoint angle ranges, we want to efficiently determine if the intersection between $X+t$ and $Y$ has the same combinatorial description for all translations $t$ in the interior of $\mathcal{T}'$.


\begin{lemma} \label{lem:minkowskiintersect}
    Let $P$ and $Q$ be any two convex polygons, and let $X$ and $Y$ be any two connected polygonal chains whose edges are a subset of the edges of $P$ and $Q$, respectively. Assume that $\Lambda_P(X) \cap \Lambda_Q(Y) = \emptyset$, and we have access to the vertices of $X$ and $Y$ in sorted order with $O(1)$ query time. Let $\mathcal{T}' \subset \mathbb{R}^2$ be any triangle or line segment. In $O(\log \vert X\vert + \log \vert Y\vert)$ time, we can determine whether $(X+t)\cap Y$ has the same combinatorial description for all $t$ in the interior of $\mathcal{T}'$.
\end{lemma}

\begin{proof}
    Recall that $(X+t)\cap Y$ has the same combinatorial description for all translations $t$ in the interior of $\mathcal{T}'$ if and only if, for every edge $e$ of $X$ and edge $e'$ of $Y$, one of the following holds:
    \begin{enumerate}
        \item $\mathcal{T}' \subseteq \pi(e, e')$.
        \item $\pi(e, e')$ is disjoint from the interior of $\mathcal{T}'$.
    \end{enumerate}
    
    Let the set of vertices of $\mathcal{T}'$ be $V(\mathcal{T}')$. For each $t \in V(\mathcal{T}')$, we know by Lemma~\ref{lem:layer} that $X+t$ and $Y$ can intersect at most twice, and we can find the intersection(s) in time $O(\log \vert X\vert + \log \vert Y\vert)$. 
    Let $L$ be the set of all edge pairs from $X$ and $Y$ containing an intersection under at least one of these translations. Define
    \[L_X = \{\textnormal{edge } e \textnormal{ of } X : \exists \textnormal{ edge } e' \textnormal{ of } Y \textnormal{ such that } (e, e') \in L\}\]
    and
    \[L_Y = \{\textnormal{edge } e' \textnormal{ of } Y : \exists \textnormal{ edge } e \textnormal{ of } X \textnormal{ such that } (e, e') \in L\}.\]
    Because $\vert V(\mathcal{T}') \vert = O(1)$, we have $\vert L\vert = O(1)$ and hence $\vert L_X\vert = O(1)$ and $\vert L_Y\vert = O(1)$.
    
    For each $(e, e') \in L_X \times L_Y$, verify that one of the following two cases hold:
    \begin{enumerate}
        \item $\mathcal{T}' \subseteq \pi(e, e')$.
        \item $\pi(e, e')$ is disjoint from the interior of $\mathcal{T}'$.
    \end{enumerate}
    This can be done via a direct computation in time $O(1)$. 
    If some pair $(e, e') \in L_X \times L_Y$ satisfies neither of these cases, then we are done, so assume otherwise.

    Observe that every remaining edge pair $(e, e') \not\in L_X \times L_Y$ \emph{cannot} satisfy case (1) above. To see this, consider the contrapositive: $\mathcal{T}' \subseteq \pi(e, e')$ implies that $t \in \pi(e, e')$ for all $t \in V(\mathcal{T}')$, which means that $e$ would have been included in $L_X$ and $e'$ would have been included in $L_Y$, and hence $(e, e') \in L_X \times L_Y$. So for the remainder of the proof, it will suffice to check for an edge pair $(e, e')\not\in L_X \times L_Y$ such that $\pi(e, e')$ intersects the interior of $\mathcal{T}'$.




    Let $X^-$ be $X$ with its edges appearing in $L_X$ removed, and let $Y^-$ be $Y$ with its edges appearing in $L_Y$ removed. (See Figure \ref{fig:XY}.) By construction, for all translations $t \in \mathbb{R}^2$, we have that $X^- + t$ and $Y$ never intersect at an edge pair in $L_X \times L_Y$, and we have that $X + t$ and $Y^-$ never intersect at an edge pair in $L_X \times L_Y$. On the other hand, for every edge pair $(e, e') \not\in L_X \times L_Y$, either $e$ is an edge of $X^-$, or $e'$ is an edge of $Y^-$. So there exists an edge pair $(e, e')\not\in L_X \times L_Y$ such that $\pi(e, e')$ intersects the interior of $\mathcal{T}'$ if and only if at least one of the following is satisfied:
    \begin{enumerate}
        \item There exists a translation $t$ in the interior of $\mathcal{T}'$ such that $X^- + t$ intersects $Y$.
        \item There exists a translation $t$ in the interior of $\mathcal{T}'$ such that $X + t$ intersects $Y^-$.
    \end{enumerate}

\begin{figure}
    \centering
    \includegraphics[width=0.7\textwidth]{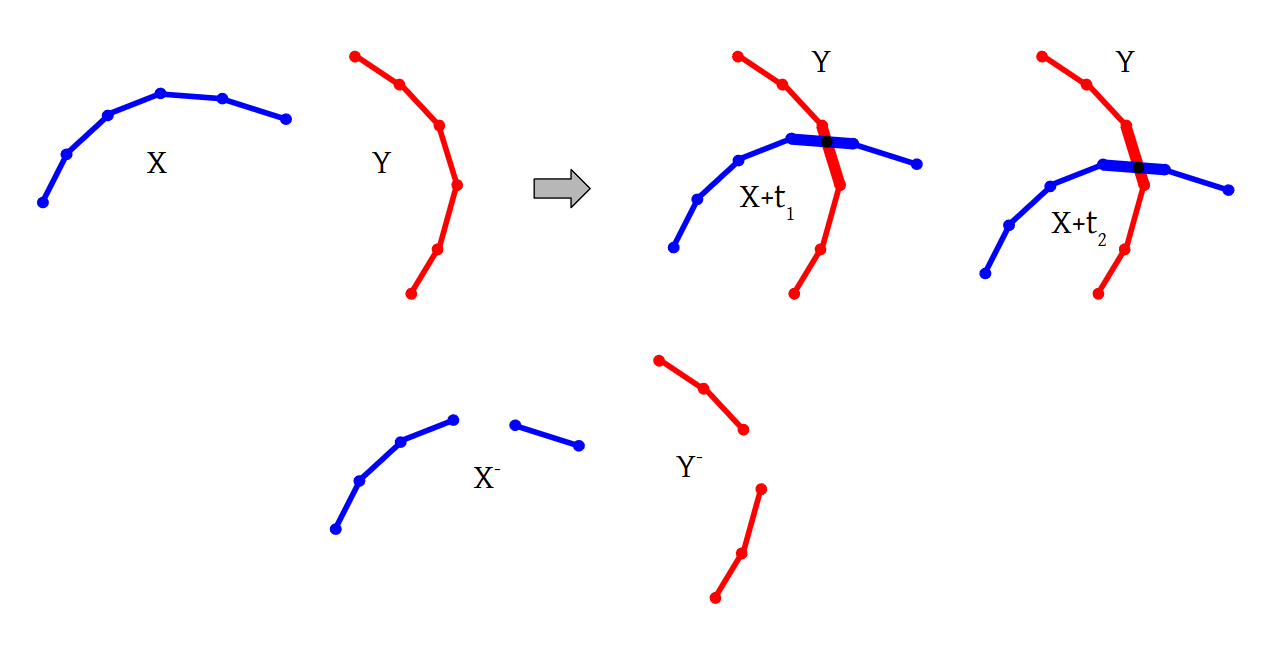}
    \caption{An example construction of $X^-$ and $Y^-$. In this case, we have $V(\mathcal{T}') = \{t_1, t_2\}$, so $\mathcal{T}'$ is a line segment. Observe that $X^-$ and $Y^-$ are not necessarily connected.}
    \label{fig:XY}
\end{figure}

    We describe how to check for case (1), as case (2) is symmetric. First decompose $X^-$ into $O(1)$ connected convex polygonal chains $X^-_1, \ldots, X^-_{O(1)}$ (these chains are delimited by the edges in $L_X$). Let $\mathcal{T}'^{-}$ be an infinitesimally smaller copy of $\mathcal{T}'$ that is strictly inscribed within $\mathcal{T}'$.\footnote{Infinitesimal shifts are a standard trick, and all of the algorithms in this paper can handle them with only minor modifications.} Observe that case (1) holds if and only if there exists an index $i$ such that the Minkowski sum $X^-_i + \mathcal{T}'^{-}$ intersects $Y$. In the rest of the proof, we describe how to check for an intersection involving a single $X^-_i$. Repeating an identical procedure for the other pieces only increases the running time by a constant factor.
    
    While the entire Minkowski sum is not necessarily a convex polygon,
    its boundary is defined by $O(1)$ convex polygonal chains, and these can be identified in time $O(\log\vert X^-_i\vert)$ via binary searches on $X^-_i$. More concretely, $X^-_i + \mathcal{T}'^{-}$ is bounded by $O(1)$ polygonal chains $X^-_{i,1}, \ldots, X^-_{i, O(1)}$ that are translated copies of connected subchains of $X^-_i$, along with $O(1)$ additional edges $e_{i, 1}, \ldots, e_{i, O(1)}$.
    
    Observe that $X^-_i + \mathcal{T}'^{-}$ intersects $Y$ if and only if at least one of the following holds: 
    \begin{enumerate}
        \item There exists an index $j$ such that $X^-_{i, j}$ intersects $Y$.
        \item There exists an index $j$ such that $e_{i, j}$ intersects $Y$.
        \item $Y$ is contained within $X^-_i + \mathcal{T}'^{-}$.
    \end{enumerate}
    To check for the first case, we apply Lemma \ref{lem:layer} for each choice of $j$. (As in Section \ref{sec:t0}, observe that the angle range conditions are invariant under translation.) We check for the second case using known algorithms to intersect a line segment with a convex polygonal chain (see for example \cite{dobkin1991detecting}), again iterating over each choice of $j$. To check for the third case, we orient an arbitrary vertex of $Y$ with respect to $X^-_i + \mathcal{T}'^{-}$, which takes $O(\log\vert X^-_i\vert)$ time because we already decomposed the boundary of $X^-_i + \mathcal{T}'^{-}$ into $O(1)$ convex polygonal chains. The total time to check for all three cases is $O(\log\vert X^-_i\vert + \log\vert Y\vert) \leq O(\log\vert X\vert + \log \vert Y\vert)$.
\end{proof}

Below we generalize Lemma \ref{lem:minkowskiintersect}, so that the algorithm applies to unions of blocks from $P$ and $Q$ instead of convex polygonal chains.

\begin{lemma} \label{lem:findanyintersect}
    There is an algorithm which takes as input (i) a $(\mu, b, \mathcal{T})$-block structure, (ii) a triangle or line segment $\mathcal{T}' \subseteq \mathcal{T}$, and (iii) two intervals $\mathcal{I}_P, \mathcal{I}_Q \subseteq [\lceil N/b\rceil]$ such that $\mathcal{I}_P \cap \mathcal{I}_Q = \emptyset$, runs in time $O(\log\vert \mathcal{I}_P\vert + \log \vert \mathcal{I}_Q\vert + \log b)$, and behaves as follows. Define simple polygons
    \[A = \bigcup_{i \in \mathcal{I}_P} P_i \quad \text{and} \quad B = \bigcup_{j \in \mathcal{I}_Q} Q_j,\]
    where $P_1, \ldots, P_{\lceil N/b\rceil}$ and $Q_1, \ldots, Q_{\lceil N/b\rceil}$ are the blocks of $P$ and $Q$ determined by parameter~$b$. 
    The algorithm determines whether $(A+t)\cap B$ has the same combinatorial description for all translations $t$ in the interior of $\mathcal{T}'$.
\end{lemma}

\begin{proof}
    As in the proof of Lemma \ref{lem:intersectarea}, define two convex polygonal chains $X = E_P(A)$ and $Y = E_Q(B)$, and observe that $\mathcal{I}_P \cap \mathcal{I}_Q = \emptyset$ implies $\Lambda_P(X) \cap \Lambda_Q(Y) = \emptyset$.\footnote{In contrast to the proof of Lemma \ref{lem:intersectarea}, we don't need to assume that $A$ and $B$ are convex.} The set of edges in $X$ contains all but two edges of $A$; let the remaining edges be $e_1$ and $e_2$. Similarly, the set of edges in $Y$ contains all but two edges of $B$; let the remaining edges be $e_1'$ and $e_2'$.

    Apply Lemma \ref{lem:minkowskiintersect} to determine whether there exists an edge $e$ of $X$ and an edge $e'$ of $Y$ such that neither of the following holds:
    \begin{enumerate}
        \item $\mathcal{T}' \subseteq \pi(e, e')$.
        \item $\pi(e, e')$ is disjoint from the interior of $\mathcal{T}'$.
    \end{enumerate}
    The time required is $O(\log \vert X\vert + \log \vert Y \vert) = O(\log\vert \mathcal{I}_P\vert + \log \vert \mathcal{I}_Q\vert + \log b)$. If we detect an edge pair $(e, e')$ satisfying neither of these cases, then we are done, so assume otherwise. Every remaining pair $(e, e')$ is in one of the following sets:
    \begin{enumerate}
        \item $\{e_1, e_2\} \times \{e_1', e_2'\}$.
        \item $\{e_1\} \times \{e' : e' \textnormal{ is an edge of } Y\}$.
        \item $\{e_2\} \times \{e' : e' \textnormal{ is an edge of } Y\}$.
        \item $\{e : e \textnormal{ is an edge of } X\} \times \{e_1'\}$.
        \item $\{e : e \textnormal{ is an edge of } X\} \times \{e_2'\}$.
    \end{enumerate}

    Case (1) can be checked via a direct computation in time $O(1)$. We now describe how to check for case (2), as the remaining cases are symmetric. First pick an arbitrary auxiliary point $p_1$ that does not lie on the line extension of $e_1$, and let $Z$ be the convex hull of $e_1$ and $p_1$. The purpose in defining $Z$ is so that $e_1$ has an angle range $\theta_Z(e_1)$ defined with respect to a convex polygon. Partition $Y$ into two subchains $Y^-$ and $Y^+$, such that $\theta_Z(e_1) \not\in \Lambda_Q(Y^-)$ and $\theta_Z(e_1) \not\in \Lambda_Q(Y^+)$. (If there exists an edge ${e}'$ of $Y$ with $\theta_Z(e_1) = \theta_Q({e}')$, then remove ${e}'$ and check the pair $(e_1, e')$ directly.) The time required so far is $O(\log \vert Y\vert)$. Now simply apply Lemma \ref{lem:minkowskiintersect} to $e_1$ (interpreted as an edge on the auxiliary convex polygon $Z$) and each of $Y^-$ and $Y^+$.
    
    The total time for cases (2) - (5) is $O(\log \vert X\vert + \log \vert Y \vert) = O(\log\vert \mathcal{I}_P\vert + \log \vert \mathcal{I}_Q\vert + \log b)$. 
\end{proof}

Now we give a generalized version of Lemma \ref{lem:intersectarea}. Suppose that we have a region $\mathcal{T}' \subset \mathbb{R}^2$, along with a union of blocks $A$ from $P$ and a union of blocks $B$ from $Q$, such that the angle ranges for $A$ and $B$ are disjoint. Also assume that the combinatorial description of $(A+t) \cap B$ is the same for all translations $t$ in the interior of $\mathcal{T}'$. We want to find a 
quadratic function $f(t)$ such that $f(t) = \textnormal{Area}((A+t) \cap B)$ for all $t$ in $\mathcal{T}'$. 
    
\begin{lemma} \label{lem:makequadratic}
    There is an algorithm which takes as input (i) a $(\mu, b, \mathcal{T})$-block structure, (ii) a triangle or line segment $\mathcal{T}' \subseteq \mathcal{T}$, and (iii) two intervals $\mathcal{I}_P, \mathcal{I}_Q \subseteq [\lceil N/b\rceil]$ such that $\mathcal{I}_P \cap \mathcal{I}_Q = \emptyset$, runs in time $O(\log\vert \mathcal{I}_P\vert + \log \vert \mathcal{I}_Q\vert + \log b)$, and behaves as follows. Define simple polygons
    \[A = \bigcup_{i \in \mathcal{I}_P} P_i \quad \text{and} \quad B = \bigcup_{j \in \mathcal{I}_Q} Q_j,\]
    where $P_1, \ldots, P_{\lceil N/b\rceil}$ and $Q_1, \ldots, Q_{\lceil N/b\rceil}$ are the blocks of $P$ and $Q$ determined by parameter~$b$. 
    The algorithm either outputs a 
    quadratic function $f(t)$ such that $f(t) = \textnormal{Area}((A+t) \cap B)$ for all $t \in \mathcal{T}'$, or correctly reports that $(A+t)\cap B$ does not have the same combinatorial description for all translations $t$ in the interior of $\mathcal{T}'$.
\end{lemma}

\begin{proof}
    First apply Lemma \ref{lem:findanyintersect} to determine if $(A+t)\cap B$ has the same combinatorial description for all translations $t$ in the interior of $\mathcal{T}'$. If this is not the case, then we are done, so assume otherwise. Now let $\hat{t}$ be an arbitrary point in the interior of $\mathcal{T}'$, and invoke Lemma~\ref{lem:intersectarea} to compute $\textnormal{Area}((A+\hat{t}) \cap B)$, along with the list $L$ of size at most $O(1)$ containing every pair of edges on the boundaries of $A+\hat{t}$ and $B$ that intersect. If $L = \emptyset$, then we simply set $f(t) = \textnormal{Area}((A+\hat{t}) \cap B)$ and output this constant function, so assume otherwise. Because $\hat{t}$ lies in the interior of $\mathcal{T}'$ and the combinatorial description is fixed over this interior, every pair in $L$ is precisely a pair $(e,e')$ with $\mathcal{T}'\subseteq\pi(e,e')$, and no other pair has this property. The total time so far is $O(\log\vert \mathcal{I}_P\vert + \log \vert \mathcal{I}_Q\vert + \log b)$.

    As per Lemma \ref{lem:existsquadratic}, the gradient field of $\textnormal{Area}((A+t) \cap B)$, when $t$ is restricted to the interior of $\mathcal{T}'$, can be computed from $L$ in time $O(\vert L\vert)$. 
    Because $\vert L\vert = O(1)$, we can compute the gradient field in $O(1)$ time. Integrating this function, and using the reference point $\hat{t} \in \mathcal{T}'$ for which we know $\textnormal{Area}((A+\hat{t}) \cap B)$, we get a single function $f(t) = \textnormal{Area}((A+t) \cap B)$ that is quadratic in $t$. This requires $O(1)$ additional time.
\end{proof}
    


\paragraph{The Cascading Subroutine.}
We are now ready to give our algorithm for \textsc{Cascade}. Throughout, let $P_1, \ldots, P_{\lceil N/b\rceil}$ and $Q_1, \ldots, Q_{\lceil N/b\rceil}$ be the blocks of $P$ and $Q$ determined by parameter $b$, and let $P_1', \ldots, P_{\lceil N/b'\rceil}'$ and $Q_1', \ldots, Q_{\lceil N/b'\rceil}'$ be the blocks of $P$ and $Q$ determined by parameter $b'$.
The plan is to find a quadratic summary function for almost all indices $k \in [\lceil N/b'\rceil]$ in two steps: 

\begin{enumerate}
    \item Scan through the information provided in the input $(\mu, b, \mathcal{T})$-block structure, and use this to recover part of the function $\textnormal{Area}((P_k'+t) \cap Q_k')$.
    \item Recover the missing information for $\textnormal{Area}((P_k'+t) \cap Q_k')$ by intersecting pairs of unions of blocks with disjoint angle ranges.
\end{enumerate}

\begin{proof}[Proof of Lemma \ref{lem:Cascade}.]
    Observe that the existing $(\mu, b, \mathcal{T})$-block structure already provides the following:
    \begin{enumerate}
        \item Access to the vertices 
        of $P$ and $Q$ in counterclockwise order with $O(1)$ query time, and access to the area prefix sums for $P$ and $Q$ with $O(1)$ query time. 
        \item Access in $O(1)$ query time to the sorted multiset
        \[U = \{\theta_P(e) : e \textnormal{ is an edge of $P$}\} \,\cup\, \{\theta_Q(e) : e \textnormal{ is an edge of $Q$}\},\]
        and access in $O(1)$ query time to the edges of $P$ and $Q$ sorted by their angle ranges.
    \end{enumerate}
    The new $(\mu', b', \mathcal{T}')$-block structure will simply reference these existing data structures; the setup time is $O(1)$. All that remains is to compute the partition of $[\lceil N/b'\rceil]$ into sets $S_{\textnormal{Good}}'$ and $S_{\textnormal{Bad}}'$ and to find, for each $k \in S_{\textnormal{Good}}'$, a 
    quadratic function $f_k'(t)$ such that $f_k'(t) = \textnormal{Area}((P_k'+t) \cap Q_k')$ for all $t \in \mathcal{T}'$.
    
    


    Recall from the statement of \textsc{Cascade} that $C_k := [(k-1)\cdot \frac{b'}{b} + 1, k \cdot \frac{b'}{b}]$ for $1 \leq k < \lceil N/b'\rceil$, and $C_{\lceil N/b'\rceil} := [(\lceil N/b'\rceil-1)\cdot \frac{b'}{b} + 1, \lceil N/b\rceil]$. 
    By definition of the blocking scheme, and because $b$ divides $b'$, we have that $P_k' = \bigcup_{i \in C_k}P_i$ and $Q_k' = \bigcup_{j \in C_k}Q_j$. 
    This gives the following decomposition:
    \begin{align*}
        \textnormal{Area}\left((P_k'+t) \cap Q_k'\right) & = \!\sum_{(i, j) \in C_k \times C_k}\!\!\!\!\textnormal{Area}((P_i+t) \cap Q_j) \\
        & = \sum_{i \in C_k}\textnormal{Area}((P_i+t) \cap Q_i)\ + \!\sum_{\substack{(i, j) \in C_k \times C_k:\\ i \neq j}}\!\!\!\!\textnormal{Area}((P_i+t) \cap Q_j).
    \end{align*}

    We now describe how to compute $\sum_{i \in C_k}\textnormal{Area}((P_i+t) \cap Q_i)$. The algorithm can be viewed as a counterpart of the one in Lemma \ref{lem:firstsummation}.

    \begin{claim} \label{claim:cis}
        There is an algorithm running in time $O(N/b + \mu b^2)$ that behaves as follows. For each $k \in [\lceil N/b'\rceil]$, the algorithm will either output a 
        quadratic function $f_k^{(1)}(t)$ such that
        \[f_k^{(1)}(t) = \sum_{i \in C_k}\textnormal{Area}((P_i+t) \cap Q_i)\]
        for all $t \in \mathcal{T}'$, or it will correctly determine that there exists an index $i \in C_k$ such that $(P_i+t)\cap Q_i$ does not have the same combinatorial description for all $t$ in the interior of $\mathcal{T}'$.
    \end{claim}

    \begin{proof}
        Initialize $f_k^{(1)}(t) = 0$ for all $k \in [\lceil N/b'\rceil]$. For each $i$, let $k(i)$ be the unique index such that $i \in C_k$. Scan through the list $S_{\textnormal{Good}}$ from the existing $(\mu, b, \mathcal{T})$-block structure. For each term $\textnormal{Area}((P_i+t) \cap Q_i)$ with $i \in S_{\textnormal{Good}}$, the existing $(\mu, b, \mathcal{T})$-block structure already stores a 
        quadratic function $f_i(t)$ such that $f_i(t) = \textnormal{Area}((P_i+t) \cap Q_i)$ for all $t \in \mathcal{T}$, and that same function is valid on $\mathcal{T}' \subseteq \mathcal{T}$. Add each such $f_i(t)$ to $f_{k(i)}^{(1)}(t)$. The total time so far is $O(\vert S_{\textnormal{Good}}\vert) = O(N/b)$. 
    
        Now initialize sets $S_{\textnormal{Bad}}^+ = \emptyset$ and $S_{\textnormal{Bad}}^- = \emptyset$, and scan through the list $S_{\textnormal{Bad}}$ from the existing $(\mu, b, \mathcal{T})$-block structure. 
        Determine for each $i \in S_{\textnormal{Bad}}$ whether $(P_i+t)\cap Q_i$ has the same combinatorial description for all $t$ in the interior of $\mathcal{T}'$, by directly examining each line segment in $\Pi(P_i, Q_i)$ in time $O(b^2)$. Every $i$ satisfying this condition is added to $S_{\textnormal{Bad}}^+$, and every $i$ not satisfying this condition is added to $S_{\textnormal{Bad}}^-$. The total time is 
        $O(\mu b^2)$ over all choices of $i \in S_{\textnormal{Bad}}$.

        For each $i \in S_{\textnormal{Bad}}^+$, we know by Lemma \ref{lem:existsquadratic} that $\textnormal{Area}((P_i + t) \cap Q_i)$ can be summarized by a 
        quadratic function $g_i(t)$ for all $t \in \mathcal{T}'$. We now describe how to find $g_i(t)$ for each such index in time $O(b^2)$. Compute the set $L=\{(e,e'):\mathcal{T}'\subseteq\pi(e,e')\}$ by examining every line segment in $\Pi(P_i, Q_i)$, and then compute the gradient field of $\textnormal{Area}((P_i + t) \cap Q_i)$ by invoking Lemma~\ref{lem:existsquadratic} on $L$; this takes time $O(b^2)$. Then 
        evaluate $\textnormal{Area}((P_i + \hat{t}) \cap Q_i)$ for an arbitrary point $\hat{t}$ in the interior of $\mathcal{T}'$, using standard algorithms for polygon intersection (see for example \cite{greiner1998efficient}); this also takes time $O(b^2)$. With this information at hand, $g_i(t)$ can be found in $O(1)$ additional time. The total time over all $i \in S_{\textnormal{Bad}}^+$ is $O(\mu b^2)$. 

        For each $i \in S_{\textnormal{Bad}}^+$, add $g_i(t)$ to $f_{k(i)}^{(1)}(t)$. Observe that for all $k \in [\lceil N/b'\rceil]$, we now have
        \[f_k^{(1)}(t) = \sum_{i \in C_k}\textnormal{Area}((P_i+t) \cap Q_i),\]
        unless there exists at least one index $i \in C_k$ such that $i \in S_{\textnormal{Bad}}^-$. But we only have $i \in S_{\textnormal{Bad}}^-$ when $(P_i + t) \cap Q_i$ does not have the same combinatorial description for all $t$ in the interior of $\mathcal{T}'$. So we can simply return the computed functions $f_k^{(1)}(t)$ for all $k \in [\lceil N/b'\rceil]$ such that $C_k \cap S_{\textnormal{Bad}}^- = \emptyset$, and report the failure condition for the remaining $k \in [\lceil N/b'\rceil]$.
    \end{proof}

    The next step is to compute $\sum_{(i, j) \in C_k \times C_k :  i \neq j} \textnormal{Area}((P_i+t) \cap Q_j)$. The algorithm can be viewed as a counterpart of the one in Lemma \ref{lem:secondsummation}, although the decomposition will be significantly simpler.

    \begin{claim} \label{claim:trans}
        There is an algorithm running in time $O(N \cdot \frac{\log b'}{b})$ that behaves as follows. For each $k \in [\lceil N/b'\rceil]$, the algorithm will either output a 
        quadratic function $f_k^{(2)}(t)$ such that
        \[f_k^{(2)}(t) = \sum_{\substack{(i, j) \in C_k \times C_k:\\ i \neq j}}\textnormal{Area}((P_i+t) \cap Q_j)\]
        for all $t \in \mathcal{T}'$, or it will correctly determine that there exists an index pair $(i, j) \in C_k \times C_k$ with $i \neq j$ such that $(P_i+t)\cap Q_j$ does not have the same combinatorial description for all $t$ in the interior of $\mathcal{T}'$.
    \end{claim}

    \begin{proof}
        Fix an index $k \in [\lceil N/b'\rceil]$. In the following, we describe how to find $f_k^{(2)}(t)$ (or correctly identify the failure condition) in time $O(\frac{b'\log b'}{b})$. Repeating this procedure for all $O(N/b')$ choices of $k$ gives the desired output and running time.

        Notice that we can decompose the target summation as
        \[\sum_{\substack{(i, j) \in C_k \times C_k :\\ i \neq j}}\!\textnormal{Area}((P_i+t) \cap Q_j)\ = \sum_{i \in C_k}  \textnormal{Area}((P_i+t) \cap Q_{i, k}^-) + \sum_{i \in C_k} \textnormal{Area}((P_i+t) \cap Q_{i, k}^+),\]
        where
        \[Q_{i, k}^- = \bigcup_{j \in C_k : j < i} Q_j \quad \textnormal{ and } \quad Q_{i, k}^+ = \bigcup_{j \in C_k : j > i} Q_j.\]
        Observe that for each $i \in C_k$, $Q_{i, k}^-$ is a union of consecutive blocks of $Q$, and the interval $\mathcal{I}_Q \subseteq [\lceil N/b\rceil]$ containing the indices for these blocks is disjoint from the interval $\mathcal{I}_P$ containing just the block index $i$. (The same is true for each $Q_{i, k}^+$.) Therefore, we can apply Lemma \ref{lem:makequadratic} to evaluate each term in the decomposition in time $O(\log b')$. The total number of terms is $O(b'/b)$, so the total time is $O(\frac{b'\log b'}{b})$. There are two cases:
        \begin{enumerate}
            \item Every invocation of the algorithm in Lemma \ref{lem:makequadratic} succeeds in finding a quadratic function. In this case, we simply add all of the functions together to get a function $f_k^{(2)}(t)$ satisfying the requirements of the claim.
            \item For some $i \in C_k$ and some $\alpha \in \{+, -\}$, the algorithm in Lemma \ref{lem:makequadratic} reports that $(P_i + t) \cap Q_{i, k}^\alpha$ does not have the same combinatorial description for all $t$ in the interior of $\mathcal{T}'$. Because $Q_{i, k}^\alpha$ is a union of consecutive blocks $Q_j$ with $j \in C_k$ and $j \neq i$, this implies that there exists an index pair $(i, j) \in C_k \times C_k$ with $i \neq j$ such that $(P_i + t) \cap Q_j$ does not have the same combinatorial description for all $t$ in the interior of $\mathcal{T}'$. In this case, we can report failure.
        \end{enumerate} 
    \end{proof}

    With Claims \ref{claim:cis} and \ref{claim:trans} at hand, we are nearly done. Initialize $S_{\textnormal{Good}}' = \emptyset$ and $S_{\textnormal{Bad}}' = \emptyset$. Execute the algorithms in both claims, which takes time $O(N \cdot \frac{\log b'}{b} + \mu b^2)$. For every $k \in [\lceil N/b'\rceil]$ such that the algorithm in Claim \ref{claim:cis} output a function $f_k^{(1)}(t)$ and the algorithm in Claim \ref{claim:trans} output a function $f_k^{(2)}(t)$, observe that
    \begin{align*}
        \textnormal{Area}\left((P_k'+t) \cap Q_k'\right) & = \sum_{i \in C_k}\textnormal{Area}((P_i+t) \cap Q_i)\ + \!\sum_{\substack{(i, j) \in C_k \times C_k:\\ i \neq j}}\!\!\!\!\textnormal{Area}((P_i+t) \cap Q_j). \\
        & = f_k^{(1)}(t) + f_k^{(2)}(t)
    \end{align*}
    for all $t \in \mathcal{T}'$. Add every such index $k$ to $S_{\textnormal{Good}}'$, and store the function $f_k'(t) = f_k^{(1)}(t) + f_k^{(2)}(t)$. Add all of the remaining indices $k$ to $S_{\textnormal{Bad}}'$. The additional time is $O(N/b')$.

    All that remains is to reason about $\vert S_{\textnormal{Bad}}'\vert$. If we have $\vert S_{\textnormal{Bad}}'\vert \leq \mu'$, then the data structure we constructed satisfies all of the requirements of a $(\mu', b', \mathcal{T}')$-block structure, and we are done. Now assume that $\vert S_{\textnormal{Bad}}'\vert > \mu'$. We need to argue that $\mathcal{T}'$ is strictly intersected by more than $\mu'$ lines from the multiset
    \begin{equation*}
        \mathcal{S}\ := \biguplus_{k \in [\lceil N/b'\rceil]} \left(\biguplus_{(i, j) \in C_k \times C_k}\overleftrightarrow{\Pi}(P_{i}, Q_{j})\right),
    \end{equation*}
    in which case we can simply return failure.

    Consider a single index $k \in S_{\textnormal{Bad}}'$. By construction, there is at least one pair $(i, j) \in C_k \times C_k$ such that $(P_i+t)\cap Q_j$ does not have the same combinatorial description for all $t$ in the interior of $\mathcal{T}'$. Thus for some edge pair $(e,e')$ from $(P_i,Q_j)$, neither of the following holds:
    \begin{enumerate}
        \item $\mathcal{T}' \subseteq \pi(e, e')$.
        \item $\pi(e, e')$ is disjoint from the interior of $\mathcal{T}'$.
    \end{enumerate}
    This implies that there exists a line $\mathcal{L}$ in $\overleftrightarrow{\Pi}(P_i, Q_j)$ that strictly intersects $\mathcal{T}'$. Repeating the argument for all $k \in S_{\textnormal{Bad}}'$, we have a set of at least $\vert S_{\textnormal{Bad}}'\vert > \mu'$ lines from the multiset $\mathcal{S}$ that strictly intersect $\mathcal{T}'$.
\end{proof}

\section{The Linear Time Algorithm} \label{sec:linalg}

So far, we showed how to use block structures to efficiently solve \textsc{MaxRegion} when the input $(\mu, b, \mathcal{T})$-block structure has $\mathcal{T}$ being just a single point, and we showed how to efficiently construct new block structures from existing ones. In this section, we describe how to efficiently solve \textsc{MaxRegion} when the input $(\mu, b, \mathcal{T})$-block structure has $\mathcal{T}$ being a triangle or line segment, and then combine all of the subroutines to get our linear time algorithm for Problem \ref{prob:overlap}. Recall the statement of \textsc{MaxRegion}: 

\MaxRegionProblem*

Recall that $T_{d\textsc{-Max}}(N, b, \mu)$ denotes the expected running time of the fastest algorithm for \textsc{MaxRegion}, where $d$ is the dimension of $\mathcal{T}$.

We start by showing how to reduce from \textsc{MaxRegion} over a triangle to (i) a few instances of \textsc{MaxRegion} over a line segment, and (ii) a single instance of \textsc{MaxRegion} over a triangle, but with a more refined block structure.

\begin{lemma} \label{lem:subroutineT2}
    For all $2 \leq b, b' \leq N^{o(1)}$ such that $b$ divides $b'$, for all $\mu, \mu'$ such that $\mu' = N^{1-o(1)}$, 
    \begin{align*}
        T_{2\textsc{-Max}}(N, b, \mu)\ \leq\ \ & O(\log (Nb'/\mu')) \cdot T_{1\textsc{-Max}}(N, b, \mu) + O(N/b') + {} \\
        & O(1) \cdot T_{\textsc{Cascade}}(N, b, b', \mu) + T_{2\textsc{-Max}}(N, b', \mu')\\[4pt]
    \end{align*}
\end{lemma}


\begin{proof}
    Let $P_1, \ldots, P_{\lceil N/b\rceil}$ and $Q_1, \ldots, Q_{\lceil N/b\rceil}$ be the blocks of $P$ and $Q$ determined by parameter~$b$, and let $P'_1, \ldots, P'_{\lceil N/b'\rceil}$ and $Q'_1, \ldots, Q'_{\lceil N/b'\rceil}$ be the blocks determined by parameter~$b'$. As in Section \ref{sec:cascade}, define $C_k := [(k-1)\cdot \frac{b'}{b} + 1, k \cdot \frac{b'}{b}]$ for $1 \leq k < \lceil N/b'\rceil$ and $C_{\lceil N/b'\rceil} := [(\lceil N/b'\rceil-1)\cdot \frac{b'}{b} + 1, \lceil N/b\rceil]$. The bulk of the algorithm lies in finding a well-behaved triangle $\mathcal{T}' \subseteq \mathcal{T}$:

\begin{claim} \label{claim:dimensionrecurse}
    For all $2 \leq b, b' \leq N^{o(1)}$ such that $b$ divides $b'$, for all $\mu, \mu'$ such that $\mu' = N^{1-o(1)}$, there is an algorithm running in expected time $O(\log (Nb'/\mu')) \cdot T_{1\textsc{-Max}}(N, b, \mu) + O(N/b')$ that outputs a triangle $\mathcal{T}' \subseteq \mathcal{T}$ such that the following holds.
    \begin{enumerate}
        \item The point $t^* \in \mathcal{T}$ maximizing $\textnormal{Area}((P+t^*) \cap Q)$ satisfies $t^* \in \mathcal{T}'$.
        \item With high probability, $\mathcal{T}'$ is strictly intersected by at most $\mu'$ lines from the multiset
    \begin{equation*}
        \mathcal{S}\ := \biguplus_{k \in [\lceil N/b'\rceil]} \left(\biguplus_{(i, j) \in C_k \times C_k}\overleftrightarrow{\Pi}(P_{i}, Q_{j})\right).
    \end{equation*}
    \end{enumerate}
\end{claim}

\begin{proof}
    Compute the set $S'_{\textnormal{nonempty}} \subseteq [\lceil N/b'\rceil]$ of all indices $k$ such that both $P_k'$ and $Q_k'$ are nonempty, which takes time $O(N/b')$. Let $r$ be any constant such that
    \[\left\vert \biguplus_{(i, j) \in C_k \times C_k}\overleftrightarrow{\Pi}(P_{i}, Q_{j})\right\vert \leq r(b')^2\]
    for all $k \in [\lceil N/b'\rceil]$. Define $\mathcal{T}_0 := \mathcal{T}$.
    
    Our plan is to find a series of triangles $\mathcal{T}_0 \supseteq \mathcal{T}_1 \supseteq \mathcal{T}_2 \supseteq \ldots$ such that (i) each triangle is guaranteed to contain the optimal translation $t^*$ for the original region $\mathcal{T}$, and (ii) with high probability, the number of lines from $\mathcal{S}$ that strictly intersect each triangle decreases geometrically. The number of lines strictly intersecting $\mathcal{T}_0$ is at most $\vert \mathcal{S}\vert \leq O(Nb')$, so after $O(\log (Nb'/\mu'))$ iterations, we will have a triangle that is (with high probability) strictly intersected by at most $\mu'$ lines of $\mathcal{S}$. This triangle can be taken as $\mathcal{T}'$. So to prove the claim, it will suffice to argue that each triangle in the sequence can be found in expected time $O(1) \cdot T_{1\textsc{-Max}}(N, b, \mu) + O(N/(b'\log (Nb'/\mu')))$.

    \begin{figure}
    \centering
    \includegraphics[width=0.5\textwidth]{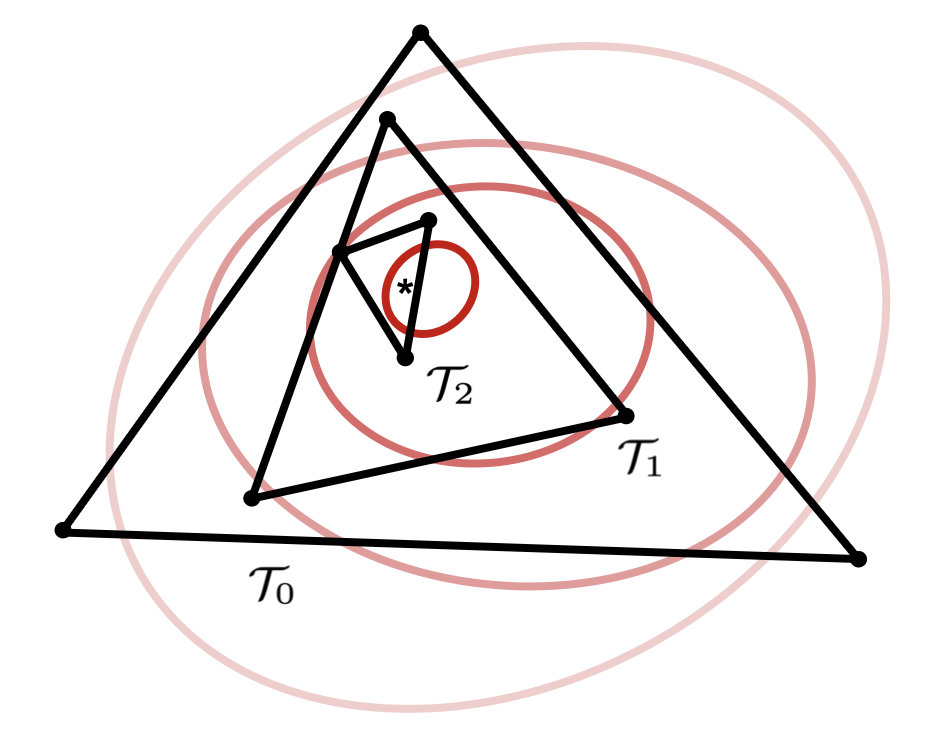}
    \caption{A series of triangles $\mathcal{T}_0$, $\mathcal{T}_1$, and $\mathcal{T}_2$ as produced by the algorithm. The (unknown) optimal placement $t^*$ is denoted by an asterisk. The red curves indicate contour lines of the objective function, with darker circles indicating larger values of $\textnormal{Area}((P+t) \cap Q)$.}
    \label{fig:Tseries}
\end{figure}
    
    Consider producing a triangle $\mathcal{T}_{\ell+1}$ from a triangle $\mathcal{T}_\ell$. We start by using Lemma \ref{cor:approxcut} to make, with high probability, a $\frac{1}{2}$-cutting for the lines of $\mathcal{S}$ that strictly intersect $\mathcal{T}_\ell$.
    
    To apply this lemma, we need an efficient sampler. We argue that rejection sampling will suffice. First, sample a uniformly random index $k \in S'_{\textnormal{nonempty}}$, and explicitly construct
    \[\mathcal{S}_k = \biguplus_{(i, j) \in C_k \times C_k}\overleftrightarrow{\Pi}(P_{i}, Q_{j}).\]
    Then sample a uniformly random line from $\mathcal{S}_k$. If the proposed line does not strictly intersect $\mathcal{T}_\ell$, then reject it. Otherwise, accept the line with probability $\vert \mathcal{S}_k\vert/(r(b')^2)$. Observe that the expected time required to sample a line is $(b')^{O(1)} / \delta = N^{o(1)} / \delta$, where $\delta$ is the fraction of lines in $\mathcal{S}$ that strictly intersect $\mathcal{T}_\ell$.

    Suppose that rejection sampling fails to find a random line of $\mathcal{S}$ that strictly intersects $\mathcal{T}_\ell$ within time $O(N^{0.1})$. This implies that $\delta \leq N^{-0.1+o(1)}$ with high probability. Because $b' \leq N^{o(1)}$, we know that $\vert \mathcal{S}\vert = N^{1+o(1)}$. This implies that, with high probability, only $N^{0.9 + o(1)}$ lines of $\mathcal{S}$ strictly intersect $\mathcal{T}_\ell$. Since $\mu' = N^{1 - o(1)}$, we can directly take $\mathcal{T}' = \mathcal{T}_\ell$.
    
    If rejection sampling is fast enough, we can produce the required $\frac{1}{2}$-cutting with high probability in time $O(\vert \mathcal{S}\vert^{0.1+o(1)}) = O(N^{0.1+o(1)}) \leq O(N/(b'\log (Nb'/\mu')))$. Then, we intersect this cutting with $\mathcal{T}_\ell$ and re-triangulate the resulting cells, which takes $O(1)$ time in total.
    
    The final step is to locate the cell of this triangulation that contains the optimal translation $t^*$. This cell will be $\mathcal{T}_{\ell+1}$; by definition of the $\frac{1}{2}$-cutting, the interior of $\mathcal{T}_{\ell+1}$ will intersect at most half as many lines of $\mathcal{S}$ as $\mathcal{T}_\ell$. All we need to do is check all $O(1)$ triangles by recursing in the dimension. Consider a single triangle $\mathcal{X}$. Let $\mathcal{X}^-$ be an infinitesimally smaller copy of $\mathcal{X}$ that is strictly inscribed within $\mathcal{X}$. 
    All $O(1)$ bounding line segments for $\mathcal{X}$ and $\mathcal{X}^-$ are contained within $\mathcal{T}$. Therefore, we can use the input $(\mu, b, \mathcal{T})$-block structure as a block structure for these line segments, and find the optimal translation (along with the area of intersection) restricted to each line segment in time $O(1) \cdot T_{1\textsc{-Max}}(N, b, \mu)$. Let $\textnormal{OPT}(\partial\mathcal{X})$ be $\textnormal{Area}((P+\tilde{t})\cap Q)$ for the optimal translation $\tilde{t}$ on the boundary of $\mathcal{X}$, and similarly for $\textnormal{OPT}(\partial\mathcal{X}^-)$. There are four cases:
    \begin{enumerate}
        \item There is a triangle $\mathcal{X}$ such that $\textnormal{OPT}(\partial\mathcal{X}^-) > \textnormal{OPT}(\partial\mathcal{X})$. By the concavity of $\sqrt{\textnormal{Area}((P+t)\cap Q)}$ (Lemma \ref{lem:objconcave}), this triangle must contain $t^*$.
        \item There is a triangle $\mathcal{X}$ such that $\textnormal{OPT}(\partial\mathcal{X}^-) = \textnormal{OPT}(\partial\mathcal{X}) > 0$. Again using Lemma \ref{lem:objconcave}, the optimal translation $\tilde{t}$ on the boundary of $\mathcal{X}$ must satisfy $\textnormal{Area}((P+\tilde{t})\cap Q) = \textnormal{Area}((P+t^*)\cap Q)$. In this case, we stop the recursion and return the degenerate triangle $\mathcal{T}' = \tilde{t}$.
        \item There is a triangle $\mathcal{X}$ such that $\textnormal{OPT}(\partial\mathcal{X}^-) < \textnormal{OPT}(\partial\mathcal{X})$, and every other triangle $\mathcal{X}'$ has either $\textnormal{OPT}(\partial\mathcal{X}'^-) < \textnormal{OPT}(\partial\mathcal{X}')$ or $\textnormal{OPT}(\partial\mathcal{X}'^-) = \textnormal{OPT}(\partial\mathcal{X}') = 0$. By Lemma \ref{lem:objconcave}, the triangle $\mathcal{X}''$ maximizing $\textnormal{OPT}(\partial\mathcal{X}'')$ must have that the optimal translation $\tilde{t}''$ on its boundary satisfies $\textnormal{Area}((P+\tilde{t}'')\cap Q) = \textnormal{Area}((P+t^*)\cap Q)$. We again stop the recursion and return $\mathcal{T}' = \tilde{t}''$.
        \item For all triangles $\mathcal{X}$, we have $\textnormal{OPT}(\partial\mathcal{X}'^-) = \textnormal{OPT}(\partial\mathcal{X}') = 0$. In this case, we find an arbitrary translation $t \in \mathbb{R}^2$ such that $(P+t)\cap Q \neq \emptyset$, and set $\mathcal{T}_{\ell+1}$ to be the triangle $\mathcal{X}$ such that $t \in \mathcal{X}$. (If no such triangle exists, then $\textnormal{Area}((P+t^*)\cap Q) = 0$. We stop the recursion and return $\mathcal{T}' = \tilde{t}$ for an arbitrary $\tilde{t} \in \mathcal{T}$.)
    \end{enumerate}
\end{proof}

We now bound $T_{2\textsc{-Max}}(N, b, \mu)$. Invoke the algorithm in Claim \ref{claim:dimensionrecurse}, and then invoke \textsc{Cascade} using the resulting triangle $\mathcal{T}'$. If \textsc{Cascade} successfully creates a $(\mu', b', \mathcal{T}')$-block structure, then we recursively find the optimal translation in $\mathcal{T}'$ in time $T_{2\textsc{-Max}}(N, b', \mu')$. If \textsc{Cascade} returns failure, then we simply invoke the algorithm in Claim \ref{claim:dimensionrecurse} again to get a new triangle $\mathcal{T}'$ and try again. Because \textsc{Cascade} can only return failure when $\mathcal{T}'$ is strictly intersected by more than $\mu'$ lines from $\mathcal{S}$, the expected number of attempts is $O(1)$ (indeed, $1 + N^{-\omega(1)}$).
\end{proof}

A similar lemma holds for \textsc{MaxRegion} over line segments:

\begin{lemma} \label{lem:subroutineT1}
    For all $2 \leq b, b' \leq N^{o(1)}$ such that $b$ divides $b'$, for all $\mu, \mu'$ such that $\mu' = N^{1-o(1)}$, 
    \begin{align*}
        T_{1\textsc{-Max}}(N, b, \mu)\ \leq\ \ & O(\log (Nb'/\mu')) \cdot T_{0\textsc{-Max}}(N, b, \mu) + O(N/b') + {}\\
        & O(1) \cdot T_{\textsc{Cascade}}(N, b, b', \mu) + T_{1\textsc{-Max}}(N, b', \mu').
    \end{align*}
\end{lemma}

\begin{proof}
    Nearly identical to the proof of Lemma \ref{lem:subroutineT2}. We produce a series of line segments $\mathcal{T}_0 \supseteq \mathcal{T}_1 \supseteq \mathcal{T}_2 \supseteq \ldots$, and construct cuttings along the line parallel to $\mathcal{T}_0$ as opposed to the entirety of $\mathbb{R}^2$.
\end{proof}

\paragraph{Putting It All Together.}
We now present our linear time algorithm. This amounts to solving a curious recurrence, and we actually have significant freedom in choosing the values for $b$ and $b'$.

\begin{theorem}
    There is an algorithm for Problem \ref{prob:overlap} running in expected time $O(n+m)$.
\end{theorem}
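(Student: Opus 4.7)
The plan is to bootstrap from a trivial initial block structure and then call \textsc{MaxRegion} on it, unrolling the two recurrences of Lemma~\ref{lem:subroutineT12} along a carefully chosen escalating sequence of block sizes $b_0 \mid b_1 \mid \cdots \mid b_L$. In $O(N)$ preprocessing we build a $(\mu_0, b_0, \mathcal{T}_0)$-block structure with $b_0$ a sufficiently large constant, $\mu_0 = 0$, and $\mathcal{T}_0$ a large triangle provably containing $t^*$: this needs only the area prefix sums of Section~2, the $O(N/b_0)$ small block polygons, and the explicit quadratic functions $f_i$ computed blockwise by brute force. The optimum is then extracted by invoking \textsc{MaxRegion} on this initial structure.

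For the parameter schedule I will let $b_i$ grow super-polynomially (e.g.\ $b_{i+1} = b_i^c$ for a fixed constant $c > 1$), terminating at $b_L$ equal to a large polylogarithm of $N$ so that $b_L = N^{o(1)}$ and Lemmas~\ref{lem:T0}, \ref{lem:Cascade}, and~\ref{lem:subroutineT12} all apply; this gives $L = O(\log\log N)$ levels. The bad-block budget will be $\mu_i = N/b_i^{\alpha}$ for a modest constant $\alpha$, keeping both $\mu_i b_i$ (in the point oracle) and $\mu_i b_i^2$ (in the cascade) strictly sub-dominant, while $\log(Nb_{i+1}/\mu_{i+1}) = O(\log N)$ and $\mu_{i+1} = N^{1-o(1)}$ as Lemma~\ref{lem:subroutineT12} requires.

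Fully unrolling Lemma~\ref{lem:subroutineT12} expresses $T_{2\textsc{-Max}}(N, b_0, \mu_0)$ as a sum over $j = 0, \ldots, L-1$ of terms of the form $O(j\log^2 N)\cdot T_{0\textsc{-Max}}(N, b_j, \mu_j)$ (one $\log N$ each from unrolling the $T_{2\textsc{-Max}}$ and the $T_{1\textsc{-Max}}$ recurrences, plus a linear-in-$j$ cross-sum factor from swapping the order of summation) and $O(\log N)\cdot T_{\textsc{Cascade}}(N, b_j, b_{j+1}, \mu_j)$, plus $O(L^2 \cdot N^{0.1+o(1)})$ overhead and a base case $T_{2\textsc{-Max}}(N, b_L, \mu_L)$. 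Substituting Lemmas~\ref{lem:T0} and~\ref{lem:Cascade} makes each summand $O(N \cdot \polylog(N) / b_j)$, so the super-polynomial growth of the $b_j$ turns the whole thing into a geometric series summing to $O(N)$ once $b_0$ is chosen sufficiently large; the $N^{0.1+o(1)}$ overhead is absorbed since $L^2 = O(\log^2\log N)$.

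The base case $T_{2\textsc{-Max}}(N, b_L, \mu_L)$ is then evaluated by multidimensional parametric search~\cite{Toledo92} driven by the point oracle of Lemma~\ref{lem:T0}: since $b_L$ is a high polylogarithm of $N$ we have $T_{0\textsc{-Max}}(N, b_L, \mu_L) = O(N/\polylog(N))$, and the polylogarithmic overhead of parametric search brings the base cost to $O(N/\log N)$. The main obstacle is the simultaneous bookkeeping of several competing constraints---$b_i$ must outrun the $\polylog(N)$ factors accumulated by nesting the two recurrences, $\mu_i$ must be small enough to suppress the bad-block terms $\mu_i b_i$ and $\mu_i b_i^2$, and yet $\mu_{i+1}$ must still be $N^{1-o(1)}$ so that Lemma~\ref{lem:subroutineT12} remains applicable---but the authors' remark that there is ``significant freedom'' in choosing $b$ and $b'$ suggests wide latitude, and a concrete schedule such as $b_i = 2^{2^i}$ with $\mu_i = N/b_i^3$ should close the analysis.
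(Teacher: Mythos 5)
Your overall architecture is the same as the paper's: bootstrap a coarse block structure, unroll Lemma~\ref{lem:subroutineT12} over an escalating block-size schedule with $\mu=\Theta(N/b^3)$, and finish with multidimensional parametric search once $b$ reaches a polylogarithm. However, two steps fail as written. First, the initial structure cannot have $\mu_0=0$. The definition requires each $f_i$ to equal $\textnormal{Area}((P_i+t)\cap Q_i)$ for \emph{all} $t\in\mathcal{T}_0$, and over a large triangle containing the unknown $t^*$ this area is only \emph{piecewise} quadratic (its pieces are the cells of the arrangement of $\Pi(P_i,Q_i)$, cf.\ Lemma~\ref{lem:findquadratic}); for a typical block no single quadratic exists, so no amount of blockwise brute force produces one. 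The paper instead starts with \emph{every} block in $S_{\textnormal{Bad}}$ (taking $S_{\textnormal{Good}}=\emptyset$, $b=2$, $\mu=20N/2^3$), which costs nothing beyond the area prefix sums; you would need to do the same.

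Second, and more seriously, your cost accounting does not yield $O(N)$. You bound the number of cutting iterations per level by $O(\log(Nb_{j+1}/\mu_{j+1}))=O(\log N)$ and conclude that each summand is $O(N\cdot\polylog(N)/b_j)$. Summed over your doubly-exponential schedule this is dominated by the first term $O(N\cdot\polylog(N)/b_0)$, which is superlinear for any constant $b_0$; and if you instead start at $b_0=\polylog(N)$ to absorb it, you can no longer build the initial structure cheaply, since placing all $N/b_0$ blocks in $S_{\textnormal{Bad}}$ makes the first cascade cost $\mu_0 b_0^2=\Omega(Nb_0)$. The whole point of choosing $\mu'=\Theta(N/(b')^3)$ is that $\log(Nb'/\mu')=O(\log b')$, \emph{not} $O(\log N)$: then level $j$ costs only $O(N\log^{O(1)}b_j/b_j)$, the series telescopes to $O(N)$ already from $b_0=2$ (even with the paper's modest schedule $b'=2b$), and the parametric-search base case contributes $O(N\log^2\log N/\log N)$. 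With these two corrections your argument coincides with the paper's proof.
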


\begin{proof}
We will always take $\mu := 20 \cdot N/b^3$ and $\mu' := 20 \cdot N/(b')^3$, so we can drop the $\mu$ and $\mu'$ arguments from the time complexity of each subroutine/oracle. We begin by reducing from Problem~\ref{prob:overlap} to an instance of \textsc{MaxRegion} over a $(20 \cdot N/2^3, 2, \mathcal{T})$-block structure\footnote{Notice that taking $\mu = 20 \cdot N/2^3$ actually exceeds the total number of block indices. This is well-defined because $\mu$ is just an upper bound on $\vert S_{\textnormal{Bad}}\vert$.}
where $\mathcal{T}$ is a triangle. The partition of $S$ will simply be $S_{\textnormal{Bad}} = [\lceil N/2\rceil]$ and $S_{\textnormal{Good}} = \emptyset$. So the only work required for initialization is to: (i) store the vertices/edges of $P$ and $Q$ appropriately, (ii) compute the area prefix sums for $P$ and $Q$ once and for all, (iii) sort the multiset
\[U = \{\theta_P(e) : e \textnormal{ is an edge of $P$}\} \,\cup\, \{\theta_Q(e) : e \textnormal{ is an edge of $Q$}\}\]
once and for all, and (iv) identify any triangle $\mathcal{T}$ such that, for all $t \in \mathbb{R}^2$, if $(P+t)\cap Q$ is nonempty then $t \in \mathcal{T}$. All steps can be performed in linear time because the vertices of $P$ and $Q$ are already given in sorted order. The total expected running time for Problem \ref{prob:overlap} is thus $O(N) + T_{2\textsc{-Max}}(N, 2)$.

Before analyzing $T_{2\textsc{-Max}}(N, 2)$, we find a closed-form expression for $T_{1\textsc{-Max}}(N, b)$ when $2 \leq b \leq N^{o(1)}$. As per Lemma \ref{lem:subroutineT1}, we have the following for any $2 \leq b, b' \leq N^{o(1)}$ such that $b$ divides $b'$:
\begin{align*}T_{1\textsc{-Max}}(N, b)\ \leq\ \ & O(\log b') \cdot T_{0\textsc{-Max}}(N, b) + O(N/b') + {}\\
        & O(1) \cdot T_{\textsc{Cascade}}(N, b, b') + T_{1\textsc{-Max}}(N, b').\end{align*}
We know by Lemma \ref{lem:Cascade} that
\[T_{\textsc{Cascade}}(N, b, b') \leq O\left(N \cdot \frac{\log b'}{b}\right),\]
and we know by Lemma \ref{lem:T0} that
\[T_{0\textsc{-Max}}(N, b) \leq O\left(N \cdot \frac{\log b}{b}\right).\]
Putting everything together, we have the following for any $2 \leq b, b' \leq N^{o(1)}$ such that $b$ divides $b'$:
\[T_{1\textsc{-Max}}(N, b) \leq O\left(N \cdot \frac{\log b'\log b}{b}\right) + T_{1\textsc{-Max}}(N, b').\]

We can efficiently parallelize the algorithm for zero-dimensional \textsc{MaxRegion} given in the proof of Lemma \ref{lem:T0}. So via a standard application of multidimensional parametric search (see Theorem 4.4 in the full version of \cite{Toledo92}) that uses zero-dimensional \textsc{MaxRegion} for the decision problem, $T_{1\textsc{-Max}}(N, b) \leq O(N/\log N)$
when $b \geq (\log N)^{c_1}$ for some constant $c_1$. If we take $b' = 2b$ (for example) and solve the recurrence, we get
\[T_{1\textsc{-Max}}(N, b) \leq O\left(\frac{N}{\log N} \ + \sum_{a = \log b}^{O(\log \log N)} N \cdot\frac{\log^2 (2^a)}{2^a}\right) = O\left(N \cdot \frac{\log^2 b}{b} + \frac{N}{\log N}\right).\]

Now we turn our attention to $T_{2\textsc{-Max}}(N, b)$. Using Lemma \ref{lem:subroutineT2}, we have the following for any $2 \leq b, b' \leq N^{o(1)}$ such that $b$ divides $b'$:
\begin{align*}
        T_{2\textsc{-Max}}(N, b)\ \leq\ \ & O(\log b') \cdot T_{1\textsc{-Max}}(N, b) + O(N/b') + {} \\
        & O(1) \cdot T_{\textsc{Cascade}}(N, b, b') + T_{2\textsc{-Max}}(N, b')
\end{align*}
Combining with Lemma \ref{lem:Cascade} and our closed form expression for $T_{1\textsc{-Max}}(N, b)$, we have (assuming the same restrictions on $b$ and $b'$):
\[T_{2\textsc{-Max}}(N, b) \leq O\left(N \cdot \frac{\log b'\log^2 b}{b} + N \cdot \frac{\log b'}{\log N}\right) + T_{2\textsc{-Max}}(N, b').\]

Using another application of multidimensional parametric search that uses zero-dimensional $\textsc{MaxRegion}$ for the decision problem, we have $T_{2\textsc{-Max}}(N, b) \leq O(N/\log N)$
when $b \geq (\log N)^{c_2}$ for some constant $c_2$. Again taking $b' = 2b$, we thus have
\begin{align*}
    T_{2\textsc{-Max}}(N, b) \leq & O\left(\frac{N}{\log N} \ + \sum_{a = \log b}^{O(\log \log N)} \left( N \cdot\frac{\log^3 (2^a)}{2^a} + N \cdot \frac{\log (2^a)}{\log N}\right)\right) \\
    \:=\: & O\left(N \cdot \frac{\log^3 b}{b} + N \cdot \frac{\log^2\log N}{\log N}\right).
\end{align*}
Plugging in $b = 2$, we have $T_{2\textsc{-Max}}(N, 2) \leq O(N) = O(n+m)$, and the theorem follows.
\end{proof}

\small
\bibliographystyle{plainurl}
\bibliography{refs}

\end{document}